\newcommand{\miniheading}[1]{\medskip\noindent\textbf{#1}}
\newtheorem{theorem}{Theorem}[section]
\newtheorem{lemma}[theorem]{Lemma}
\newtheorem{corollary}[theorem]{Corollary}
\theoremstyle{definition}
\newtheorem{claim}[theorem]{Claim}
\newtheorem{plainproblem}[theorem]{Problem}
\newenvironment{problem}[2][]{\begin{plainproblem}[#1] #2%
  \begin{description}[nosep,leftmargin=1.90cm,%
    style=sameline,align=right,format=\normalfont\emph,
    beginpenalty=10000,midpenalty=5000]%
}{%
  \end{description}\end{plainproblem}%
}
\DeclareMathOperator{\Aut}{Aut}
\DeclareMathOperator{\supp}{supp}
\DeclareMathOperator{\Sym}{Sym}
\DeclareMathOperator{\Alt}{Alt}
\DeclarePairedDelimiter{\pair}{\langle}{\rangle}
\DeclarePairedDelimiter{\card}{\lvert}{\rvert}
\DeclarePairedDelimiter{\SK@setone}{\lbrace}{\rbrace}
\DeclarePairedDelimiterX{\SK@settwo}[2]{\lbrace}{\rbrace}{#1\:\delimsize\vert\:#2}
\newcommand{\set}{\@ifstar{\SK@set{*}}{\SK@oset}}
\newcommand{\SK@oset}[1][]{\ifstrempty{#1}{\SK@set{}}{\SK@set{[#1]}}}
\newcommand{\SK@set}[2]{\@gifnextchar\bgroup{\SK@@set{#1}{#2}}{\SK@setone#1{#2}}}
\newcommand{\SK@@set}[3]{\ifstrempty{#1}{%
    \SK@settwo{#2}{#3}%
  }{%
    \SK@settwo#1{#2}{\begin{array}{@{}l@{}}#3\end{array}}%
  }}
\def\@gifnextchar#1#2#3{\let\@tempe#1\def\@tempa{#2}\def\@tempb{#3}%
  \futurelet\@tempc\@gifnch}
\def\@gifnch{\ifx\@tempc\@sptoken\let\@tempd\@tempb%
  \else\ifx\@tempc\@tempe\let\@tempd\@tempa\else\let\@tempd\@tempb\fi\fi\@tempd}
\newcommand{\cclass}[1]{\textsf{\upshape #1}}
\newcommand{\FPT}{\cclass{FPT}\xspace}
\newcommand{\WP}{\cclass{W[P]}\xspace}
\newcommand{\NP}{\cclass{NP}\xspace}
\newcommand{\Pcl}{\cclass{P}\xspace}
\newcommand{\wone}{\cclass{W[1]}\xspace}
\newcommand{\wtwo}{\cclass{W[2]}\xspace}
\newcommand{\gclass}[1]{\textsc{#1}}
\newcommand{\ldiscrete}{\gclass{Discrete$[\ell]$}}
\newcommand{\prob}[1]{\textsc{#1}}
\newcommand{\WMCS}{\prob{Weighted Monotone Circuit Satisfiability}\xspace}
\newcommand{\IndC}{\prob{$k$-$\mathcal{C}$}\xspace}
\newcommand{\IndDiscrete}{\prob{$k$-Discrete}\xspace}
\newcommand{\IndLDiscrete}{\prob{$k$-\ldiscrete}}
\newcommand{\IndRefinable}{\prob{$k$-Refinable}\xspace}
\newcommand{\kdiscrete}{\prob{$k$-Discrete}\xspace}
\newcommand{\nkdiscrete}{\prob{$(n-k)$-Discrete}\xspace}
\newcommand{\DomSet}{\prob{Dominating Set}\xspace}
\newcommand{\msat}{\prob{Mini-3SAT}}
\newcommand{\kcoldeg}{\prob{$k$-Color-Valence}\xspace}
\newcommand{\F}{\mathbb{F}}
\newcommand{\CFI}{\textrm{\upshape CFI}}
\newcommand{\IMP}{\textrm{\upshape IMP}}
\newcommand{\kfixing}{\prob{$k$-Rigid}\xspace}
\newcommand{\kbase}{\prob{$k$-Base-Size}\xspace}
\newcommand{\nkfixing}{\prob{$(n-k)$-Rigid}\xspace}
\newcommand{\nkbase}{\prob{$(n-k)$-Base-Size}\xspace}
\newcommand{\setcover}{\prob{Set-Cover}\xspace}
\newcommand{\mini}{\cclass{MINI[1]}\xspace}
\newcommand{\stabcol}{\mathcal{C}}
\newcommand{\neigh}{\mathcal{N}}
\title{The Parameterized Complexity of Fixing Number and Vertex
  Individualization\\ in Graphs\footnote{An abridged version of this
    article appears in the proceedings of MFCS~2016. This work was
    supported by the Alexander von Humboldt Foundation research
    group linkage program. The third and fourth authors are
    supported by DFG grant~KO~1053/7-2.}}
\newcommand{\authornote}[1][1]{\textsuperscript{#1}}
\author{V.~Arvind\authornote[1]\and
  Frank~Fuhlbrück\authornote[2]\and
  Johannes~Köbler\authornote[2]\and
  Sebastian~Kuhnert\authornote[2]\and
  Gaurav~Rattan\authornote[1]}
\date{June 14, 2016}
\begin{document}
\maketitle
\footnotetext[1]{The Institute of Mathematical Sciences, Chennai, India;
  \texttt{\{arvind,grattan\}@imsc.res.in}}
\footnotetext[2]{Humboldt-Universität zu Berlin, Germany;
  \texttt{\{fuhlbfra,koebler,kuhnert\}@informatik.hu-berlin.de}}
\begin{abstract}
  In this paper we study the complexity of the following problems:
\begin{enumerate}
\item Given a colored graph $X=(V,E,c)$, compute a minimum cardinality set of
  vertices $S\subset V$ such that no nontrivial automorphism of~$X$
  \emph{fixes} all vertices in~$S$. A closely related problem is
  computing a minimum base~$S$ for a permutation group $G\le S_n$
  given by generators, i.e., a minimum cardinality subset
  $S\subset [n]$ such that no nontrivial permutation in~$G$ fixes all
  elements of~$S$. Our focus is mainly on the \emph{parameterized
    complexity} of these problems. We show that when $k=\card{S}$ is
  treated as parameter, then both problems are $\mini$-hard. For the dual
  problems, where $k=n-\card{S}$ is the parameter, we give
  FPT~algorithms.
\item A notion closely related to fixing is called individualization.
  Individualization combined with the Weisfeiler-Leman procedure is a
  fundamental technique in algorithms for Graph Isomorphism.
  Motivated by the power of individualization, in the present paper we
  explore the complexity of individualization: what is the minimum
  number of vertices we need to individualize in a given graph such
  that color refinement ``succeeds'' on it. Here ``succeeds'' could
  have different interpretations, and we consider the following: It
  could mean the individualized graph becomes: (a)~discrete, (b)~amenable,
  (c)~compact, or (d)~refinable. In particular, we study
  the parameterized versions of these problems where the parameter is the
  number of vertices individualized. We show a dichotomy: For graphs
  with color classes of size at most~$3$ these problems can be solved
  in polynomial time (even in logspace), while
  starting from color class size~$4$ they become \WP-hard.
 \end{enumerate}
\end{abstract} 

\section{Introduction}

A permutation~$\pi$ on the vertex set~$V$ of a (vertex) colored graph
$X=(V,E,c)$ is an \emph{automorphism} if $\pi$~preserves edges and
colors. Uncolored graphs can be seen as the special case where all
vertices have the same color. The automorphisms of~$X$ form the
group~$\Aut(X)$, which is a subgroup of the symmetric group~$\Sym(V)$
of all permutations on~$V$.

A \emph{fixing set} for a colored graph $X=(V,E,c)$ is a subset~$S$ of vertices
such that there is no nontrivial automorphism of~$X$ that fixes every
vertex in~$S$. The \emph{fixing number} of~$X$ is the
cardinality of a smallest size fixing set of~$X$. This notion was
independently studied in~\cite{Boutin,Harary,Mohar}. A nice survey on
this and related topics is by Bailey and Cameron~\cite{BaileyCameron}.

In this paper, one of the problems of interest is the computational
complexity of computing the fixing number of graphs:
\begin{problem}{\kfixing}
 \item[Input:] A colored graph~$X$ and an integer~$k$
 \item[Parameter:] $k$
 \item[Question:] Is there a subset~$S$ of~$k$ vertices in~$V$
  such that there are no nontrivial automorphisms of~$X$ that fix
  each vertex of~$S$?
\end{problem}

There is a closely related problem that has received some attention.
Let $G\le S_n$ be a permutation group on~$[n]$. A \emph{base} of~$G$
is a subset $S\subset [n]$ such that no nontrivial permutation of~$G$
fixes each point in~$S$, i.e., the pointwise stabilizer subgroup
$G_{[S]}=\{g\in G\mid i^g=i~\forall~i\in S\}$ of~$G$
is the trivial subgroup~$\{1\}$.

\begin{problem}{\kbase}
\item[Input:] A generating set for a permutation group~$G$ on~$[n]$ and an
  integer~$k$
 \item[Parameter:] $k$
 \item[Question:] Is there a subset $S\subset [n]$ of size~$k$ such
   that no nontrivial permutation of~$G$ fixes each point in~$S$?
\end{problem}
Note that a graph~$X$ is in~\kfixing if and only if $\Aut(X)$~is
in~\kbase.

Computing a minimum cardinality base for $G\le S_n$ given by
generators is shown to be \NP-hard by Blaha~\cite{Blaha}. The same paper also
gives a polynomial-time $\log\log n$ factor approximation algorithm
for the problem, i.e., the algorithm outputs a base of size bounded by
$b(G)\log\log n$, where $b(G)$~denotes the optimal base size. We show
that this approximation factor cannot be improved unless $\Pcl=\NP$;
see Theorem~\ref{th:blaha-optimal}.

In this paper our focus is on the parameterized complexity of these
problems. Arvind has shown that \kbase is in~\FPT for transitive
groups and groups with constant orbit size~\cite{Arv13}, and raised
the question whether this can be extended to more general permutation
groups. We show that both \kfixing and \kbase are $\mini$-hard, even
when the automorphism group of the given graph~$X$ (resp., the given
group~$G$) is an elementary $2$-group; see Section~\ref{sec:fixed}.

We also consider the dual problems \nkfixing and \nkbase, which ask
whether the given graph or group have a fixing set or base that
consists of all but~$k$ vertices or points and $k$~is the parameter.
We show that these problems are fixed parameter tractable. More
precisely, we give an $k^{O(k^2)}+k\,n^{O(1)}$ time algorithm
for~\nkbase and an $k^{O(k^2)}n^{O(1)}$ time algorithm for~\nkfixing
in Section~\ref{sec:unfixed}.

\miniheading{Color refinement and individualization.}
A broader question that arises is in the context of the Graph
Isomorphism problem: Given two colored graphs $X=(V,E,c)$ and
$X'=(V',E',c')$ the problem is to decide if they
are \emph{isomorphic}, i.e., whether there is a bijection $\pi\colon
V\to V'$ such that for all $x\in V$, $c'(x^\pi)=c(x)$ and for all
$x,y\in V$, $(x,y)\in E$ if and only if $(x^\pi,y^\pi)\in E'$.

Color refinement is a classical heuristic for Graph Isomorphism, and
in combination with other tools (group-theoretic/combinatorial) it has
proven successful in Graph Isomorphism algorithms (e.g.\ in the two
most important papers in the area~\cite{BL83,B15}). The basic color
refinement procedure works as follows on a given colored graph
$X=(V,E,c)$. Initially each vertex has the color given by~$c$. The
refinement step is to color each vertex by the tuple of its own color
followed by the colors of its neighbors (in color-sorted order). The
refinement procedure continues until the color classes become stable.
If the multisets of colors are different for two graphs $X$~and~$X'$,
we can conclude that they are not isomorphic. Otherwise, more
processing needs to be done to decide if the input graphs are
isomorphic. One important approach in this area is to combine
\emph{individualization} of vertices with color refinement: Given a
graph $X=(V,E)$ and $k$~vertices $v_1,v_2,\ldots,v_k\in V$, first
these $k$~vertices are assigned distinct colors $c_1,c_2,\ldots,c_k$,
respectively. Then, with this as initial coloring, the color
refinement procedure is carried out as before. Individualization is
used both in the algorithms with the best worst case
complexity~\cite{BL83,B15} and in practical isomorphism
solvers~\cite{MP14}. Note that individualizing a vertex~$v$ results in
fixing~$v$, as every automorphism must preserve the unique color
of~$v$.

In~\cite{AKRV} we have examined several classes of colored graphs in
connection with the color refinement procedure. They form a hierarchy:
\begin{equation}
  \label{eq:cr-hier}
  \gclass{Discrete}\subsetneq
  \gclass{Amenable}\subsetneq
  \gclass{Compact}\subsetneq
  \gclass{Refinable}
\end{equation}
\begin{itemize}
 \item $X\in\gclass{Discrete}$ if running color refinement on~$X$
  results in singleton color classes.
 \item $X\in\gclass{Amenable}$ if for any~$X'$ that is non-isomorphic
  to~$X$, color refinement on $X$~and~$X'$ results in different stable
  colorings~\cite{AKRV}.
 \item $X\in\gclass{Compact}$ if every fractional automorphism of~$X$
  is a convex combination of automorphisms of~$X$~\cite{Tin91}. Here,
  automorphisms are viewed as permutation matrices that commute with
  the adjacency matrix~$A$ of~$X$, and fractional automorphisms are
  doubly stochastic matrices that commute with~$A$.
 \item $X\in\gclass{Refinable}$ if two vertices $u$~and~$v$ of~$X$
  receive the same color in the stable coloring if and only if there
  is an automorphism of~$X$ that maps~$u$ to~$v$~\cite{AKRV}.
\end{itemize}
For these graph classes, various efficient isomorphism and
automorphism algorithms are known. Motivated by the power of
individualization in relation to color refinement, we consider the
following type of problems.
\begin{problem}{\IndC (where $\mathcal{C}$ is a class of colored graphs)}
 \item[Input:] A colored graph~$X=(V,E,c)$ and an integer~$k$
 \item[Parameter:] $k$
 \item[Question:] Are there $k$~vertices of~$X$ so that
  individualizing them results in a graph in~$\mathcal{C}$?
\end{problem}

It turns out that for each class~$\mathcal{C}$ in the
hierarchy~\eqref{eq:cr-hier}, the problem~\IndC is \WP-hard, even when
the input graph has color class size at most~$4$. For color class size
at most~$3$ however, the problems become polynomial time solvable. For
the class $\ldiscrete$ of all colored graphs where $\ell$~rounds of
color refinement turn all color classes into singletons, we show that
$\IndLDiscrete$ is \wtwo-hard. These results are in
Section~\ref{sec:indiv}.

Additionally, we give an FPT algorithm for the dual problem
\nkdiscrete that asks whether there is a way to individualize all
but~$k$ vertices so that the input graph becomes discrete; see
Section~\ref{sec:nonindiv}.

\miniheading{Color valence.}
A beautiful observation due to Zemlaychenko~\cite{Zem}, that plays a
crucial role in~\cite{BL83}, concerns the color valence of a graph.
Given a colored graph $X=(V,E,c)$, the \emph{color degree}~$\deg_C(v)$
of a vertex~$v$ in a color class~$C=\set{v\in V}{c(v)=c_0}$ is the number of
neighbors of~$v$ in~$C$. The \emph{color co-degree} of~$v$ in~$C$ is
$\text{co-}\!\deg_C(v)=\card{C}-\deg_C(v)$. The \emph{color valence} of~$X$
is defined as $\max_{v,C} \min\{\deg_C(v),\text{co-}\!\deg_C(v)\}$.
Zemlyachenko has shown~\cite{Zem} that in any $n$-vertex graph
$X=(V,E)$ we can individualize $O(n/d)$~vertices so that the vertex
colored graph obtained after color refinement has color valence at
most~$d$. This gives rise to the following natural algorithmic
problem:

\begin{problem}{\kcoldeg}
\item[Input:] A colored graph~$X=(V,E,c)$ and two numbers $k$~and~$d$
 \item[Parameter:] $k$
 \item[Question:] Is there a set of $k$~vertices such that when these
   are individualized, the graph obtained after color refinement has
   color valence bounded by~$d$?
\end{problem}
We show that this problem is \WP-complete; see Corollary~\ref{cor:coldeg}.

\section{The number of fixed vertices as parameter}\label{sec:fixed}

In this section we show that the parameterized problems $\kfixing$ and
$\kbase$ are both $\mini$-hard.  The class~$\mini$ contains all 
parameterized problems that are FPT-reducible 
to~$\msat$. Both were defined in~\cite{CJ,DEFPR03}.

\begin{problem}[\cite{CJ,DEFPR03}]{\msat}
\item[Input:] A formula~$F$ in 3-CNF of size bounded by~$k\log n$ and
  the number~$n$ in unary
 \item[Parameter:] $k$
 \item[Question:] Is there a boolean assignment to the variables that
    satisfies the formula~$F$?
\end{problem}

It turns out that $\mini$~is contained in the class~$\wone$~\cite{DEFPR03}
and has a variety of complete problems in it. Moreover, it has been
linked to the exponential time hypothesis.
\begin{lemma}[\cite{CJ,DEFPR03}]
  If $\mini=\FPT$ then there is a $2^{o(n)}$ time algorithm for 3SAT.
\end{lemma}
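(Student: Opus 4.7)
The plan is as follows. Assume $\mini=\FPT$; since $\msat$ lies in~$\mini$ (it trivially FPT-reduces to itself), this gives an FPT~algorithm for~$\msat$ running in time $f(k)\cdot n^{O(1)}$ for some computable function~$f$. I would use this to solve an arbitrary 3SAT instance on $N$~variables in time~$2^{o(N)}$.

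First, apply the Sparsification Lemma of Impagliazzo, Paturi, and Zane: for any~$\epsilon>0$, in time~$2^{\epsilon N}$ one can reduce the given 3SAT instance to at most~$2^{\epsilon N}$ sparse 3-CNF formulas on $N$~variables, each of bit-size $M\le c(\epsilon)\cdot N$, such that the original formula is satisfiable iff some sparse formula is. Each sparse formula~$\phi'$ can then be encoded as an $\msat$ instance by picking a parameter~$k$ and setting $n:=2^{\lceil M/k\rceil}$; the size constraint $|\phi'|\le k\log n$ is automatically satisfied, and since $n$ appears in unary the input size is $\Theta(n)$. Running the assumed FPT~algorithm takes time $f(k)\cdot n^{O(1)} = f(k)\cdot 2^{O(c(\epsilon)N/k)}$, so the total time is at most
\[
  2^{\epsilon N}\cdot f(k)\cdot 2^{O(c(\epsilon)N/k)}.
\]

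Given any $\delta>0$, choose first $\epsilon:=\delta/3$, then~$k$ so large that $O(c(\epsilon)/k)\le\delta/3$, and finally restrict to $N$~large enough so that $f(k)\le 2^{\delta N/3}$ (possible since $f(k)$~is now a fixed constant independent of~$N$). The running time is then~$2^{\delta N}$, and since $\delta>0$ was arbitrary, a standard diagonalization (letting $\delta=\delta(N)\to 0$ slowly enough to be computable) yields a single uniform~$2^{o(N)}$ algorithm for 3SAT.

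The main technical subtlety is the careful orchestration of the three quantities $\epsilon$, $k$, and $f(k)$: they must be tuned so that each contributes only a $2^{o(N)}$ factor, which forces~$k$ to tend to infinity fast enough to suppress the exponent~$c(\epsilon)N/k$, yet slowly enough to keep $f(k)$ subexponential in~$N$. Once this diagonalization is set up, the core insight is purely an encoding trick: the unary representation of~$n$ in the $\msat$ input creates enough budget to fit a 3-CNF of size roughly $k\log n$ and still be processed in time polynomial in~$n$, while the Sparsification Lemma ensures the formula bit-size is linear in the variable count.
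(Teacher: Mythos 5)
The paper does not actually prove this lemma; it is imported verbatim from the cited references~\cite{CJ,DEFPR03}, so there is no in-paper proof to compare against. Your argument is, in outline, exactly the standard one from those sources: rescale via $n:=2^{\lceil M/k\rceil}$ so that the formula fits the $k\log n$ budget, absorb the $n^{O(1)}$ factor of the assumed \FPT algorithm into $2^{O(M/k)}$, and diagonalize over $k$ using the computability of~$f$; the Sparsification Lemma is indeed needed to pass from formula size to the number of variables (a dense formula can have $\Theta(N^3)$ clauses, which no choice of $k$ can compress to $2^{o(N)}$), and you invoke it at the right place.

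The one point that deserves care is your assertion that each sparsified formula has ``bit-size $M\le c(\epsilon)N$.'' As literally stated this is false: a formula on $N$ variables with $O_\epsilon(N)$ clauses has binary encoding length $\Theta_\epsilon(N\log N)$, since each variable occurrence costs $\log N$ bits. If ``size'' in the definition of \msat{} meant bit-length, your scheme of fixing $k$ as a constant depending only on $\delta$ would yield running time $2^{O(N\log N/k)}$, which is not $2^{\delta N}$; and repairing this by forcing $k=\omega(\log N)$ collides with the competing requirement $f(k)\le 2^{o(N)}$ when $f$ grows quickly. The argument goes through because the size measure for CNF formulas in this context (as throughout the ETH literature) is the number of variables plus clauses, i.e., the number of literal occurrences, not the encoding length. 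The paper's own problem statement is equally loose on this point, but you should make the convention explicit, since it is the one place where the proof could genuinely break rather than a cosmetic issue.
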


\begin{theorem}\label{base-mini-hard}
  The problem $\kbase$ is $\mini$-hard, even for elementary $2$-groups.
\end{theorem}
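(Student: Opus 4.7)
The plan is to build an FPT-reduction from $\msat$ to $\kbase$ whose output is generators for an elementary abelian $2$-group. Given an instance $(F, k)$ with $F$ a 3-CNF formula of size at most $k\log n$, I first preprocess $F$ by the standard copy-and-equivalence-chain trick so that every variable occurs in $O(1)$ clauses; this blows up the formula size by only a constant factor and yields a new Mini-3SAT instance with parameter $O(k)$. I then partition the variables into $k' = O(k)$ blocks of $\log n$ variables each; each block admits $n$ assignments $a\in\{0,1\}^{\log n}$, and by bounded occurrence each block contributes literals to at most $O(\log n)$ clauses.

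Next I construct $G$ on a ground set $\Omega=\bigsqcup_{i,a} O_{i,a}$ of disjoint orbits indexed by block--assignment pairs. The generators are one involution $g_C$ per clause $C$ together with a ``block forcer'' involution $h_i$ per block $i$. The orbit $O_{i,a}$ is indexed by $(\epsilon,v)\in\mathbb{F}_2\times\mathbb{F}_2^{s_{i,a}}$, where $s_{i,a}$ is the number of clauses $C$ satisfied via block $i$ by~$a$. The generator $h_i$ flips $\epsilon$ on every orbit of block $i$ and acts trivially elsewhere, while $g_C$ flips the $C$-labeled coordinate of $v$ on $O_{i,a}$ exactly when $C$ is satisfied via block $i$ by~$a$, and acts trivially otherwise. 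Since every generator acts on every orbit as an independent coordinate flip, the generators commute and all have order~$2$, so $G$ is elementary abelian; and bounded occurrence gives $s_{i,a}=O(\log n)$ and thus $|\Omega|=\poly(n)$, so the reduction is FPT.

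Correctness will follow from a stabilizer calculation. Since $G$ is abelian, all points of an orbit share the same stabilizer, so a base may be restricted to orbit representatives. In the natural basis $\{g_C\}\cup\{h_j\}$, the stabilizer of $O_{i,a}$ is spanned by $\{h_j : j\ne i\}\cup\{g_C : C \text{ not satisfied via block } i \text{ by } a\}$. The $h_i$ generators force every base to contain an orbit from each of the $k'$ blocks, so a base of size~$k'$ consists of exactly one orbit $O_{i,a^{(i)}}$ per block. Because each stabilizer is spanned by a subset of the same basis, the intersection of the chosen stabilizers equals $\mathrm{span}\{g_C : C \text{ unsatisfied by } (a^{(1)},\ldots,a^{(k')})\}$, which is trivial if and only if $(a^{(i)})$ satisfies~$F$.

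The main obstacle I anticipate is keeping each orbit polynomially sized while the abelian-group constraint forces a single stabilizer per orbit: this is exactly what the bounded-occurrence preprocessing is designed to handle, and it is the crucial technical step. Once that is in place, the remaining verifications (commutativity of the $g_C$ and $h_i$, the stabilizer calculation above, and the polynomial-time encoding of each generator as a concrete permutation of $\Omega$) are routine bookkeeping given the orbit structure.
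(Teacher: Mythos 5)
Your construction is essentially the paper's own: your group is exactly $\mathbb{F}_2^{m+k}$ acting coordinate-wise on orbits $\mathbb{F}_2^{1+s_{i,a}}$ indexed by block--assignment pairs, with your ``block forcer'' $h_i$ playing the role of the extra universe element $m+i$ that forces one orbit per block, and your stabilizer calculation is the paper's observation that a base corresponds to a set cover by the sets $S_{i,a}$. The proposal is correct and follows the same route, differing only in presentation (you describe the generators directly rather than passing through an explicit bounded-set-size \setcover instance).
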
  

\begin{proof}
  It is easy to see that $\msat$ in which each variable occurs at 
  most~$3$ times is also $\mini$-complete, by modification of a 
  standard NP-completeness proof. This only increases the size by a
  constant factor. We can therefore assume that a given $\msat$
  instance has this property.
  
  We will give an FPT many-one reduction from $\msat$ to $\kbase$.
  Let $F=C_1\wedge C_2\wedge\cdots\wedge C_m$, and $n$~in unary, be a
  $\msat$ instance with variable occurrences bounded by~$3$. Since the
  size of~$F$ is bounded by~$k\log n$, we have $m\le k\log n$.  Let~$V$
  denote the set of distinct variables in~$F$. We also have
  $\card{V}\le k\log n$. We partition~$V$ as $V=\bigsqcup_{i=1}^k V_i$,
  where $\card{V_i}\le \log n$ for $1\le i\le k$. For each~$i$, the set
  $T_i=\{0,1\}^{V_i}$ consisting of all truth assignments to variables in~$V_i$
  has size $\card{T_i}\le n$. Define the universe
  $U=\{1,2,\ldots,m,m+1,\ldots,m+k\}$. For each truth assignment $a\in
  T_i$ we define the subset $S_{i,a}\subset U$ consisting of $m+i$
  along with all~$j$ such that $a$~satisfies~$C_j$, i.e.,
  \[
    S_{i,a}=\{m+i\}\cup \{j\mid C_j \textrm{ contains a literal that is true under }a\}.
  \]
  Clearly, since each variable occurs at most $3$~times in~$F$ and since
  $\card{a}=\card{V_i}\le\log n$, it
  follows that $\card{S_{i,a}}\le 1+3\log n$. The following claim is
  straightforward.

\begin{claim}\label{setcover}
  The collection of sets $\{S_{i,a}\mid 1\le i\le k, a\in T_i\}$ with
  universe~$U$ has a set cover of size~$k$ if and only if $F$~is
  satisfiable.
\end{claim}
We will now transform this special set cover instance into an instance
of $\kbase$. The group we shall consider is~$\F_2^{m+k}$, i.e., the
product of $m+k$ copies of the group on $\{0,1\}$ defined by addition
modulo~$2$. Treating each set~$S_{i,a}$ as a subset of the coordinates
$1,2,\ldots,m+k$, we can associate a copy of~$\F_2^{\card{S_{i,a}}}$
with it. Consider the set
$\Omega=\bigsqcup_{i,a}\F_2^{\card{S_{i,a}}}$. Note that
$\card{\Omega}=\sum_{i,a}2^{\card{S_{i,a}}}\le nk$. The
group~$\F_2^{m+k}$ acts faithfully on~$\Omega$ as follows. Given an
element $u\in\F_2^{m+k}$ and a point $v\in \F_2^{\card{S_{i,a}}}$,
let~$u_{i,a}$ denote the projection of~$u$ to the coordinates
in~$S_{i,a}$. Then $u$~maps~$v$ to $v\oplus u_{i,a}$. Thus,
$\F_2^{m+k}$~is a permutation group acting on~$\Omega$ given by the
standard basis of $m+k$ unit vectors as generating set. The following
straightforward claim completes the reduction.

\begin{claim}
  The group~$\F_2^{m+k}$ acting on~$\Omega$, as defined
  above, has a base of size~$k$ if and only if the set cover instance
  $(U, \{S_{i,a}\mid 1\le i\le k, a\in T_i\})$ has a set cover of size~$k$.
\end{claim}
To see the claim, observe that $V\subseteq\Omega$ is a base if and only if the
sets~$S_{i,a}$ with $V\cap\F_2^{\card{S_{i,a}}}\neq\emptyset$ form a set cover
for~$U$. Indeed, a point $p\in U$ is covered by these sets if and only if
all $u\in\F_2^{m+k}$ with $u_p=1$ move an element of~$V$.
\end{proof}

\begin{theorem}\label{fixing-mini-hard}
  The problem $\kfixing$ is $\mini$-hard, even for graphs whose automorphism groups
    are elementary $2$-groups.
\end{theorem}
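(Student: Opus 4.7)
The plan is to give an FPT reduction from \kbase\ (which is $\mini$-hard by Theorem~\ref{base-mini-hard}) to \kfixing, preserving the parameter~$k$. Starting from an instance where $G = \F_2^{m+k}$ acts on $\Omega = \bigsqcup_{i,a} \F_2^{\card{S_{i,a}}}$ as in the previous proof, I will build a colored graph~$X$ with $\Aut(X) \cong G$, so that the automorphism group is again an elementary $2$-group, and such that the minimum fixing set of~$X$ has the same size as the minimum base of~$G$ on~$\Omega$.

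The construction has three layers. The \emph{main vertices} are $\Omega$, each colored by its orbit. For every orbit $\F_2^{\card{S_{i,a}}}$, every generator~$e_j$ with $j \in S_{i,a}$, and every pair $\{v, v \oplus e_j\}$, I add an \emph{edge vertex} of color~$j$ adjacent to both endpoints; these encode the translation structure inside each orbit. Finally, for each coordinate $j \in [m+k]$ I add a pair of \emph{sync vertices} $\ell_j^0, \ell_j^1$ sharing a color private to~$j$, with $\ell_j^b$ adjacent to every main vertex $v$ (in any orbit containing $j$) satisfying $v_j = b$.

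The two things to verify are: (a)~$\Aut(X) \cong G$ acting on the main vertices exactly as $G$ acts on~$\Omega$; and (b)~the minimum fixing set of~$X$ equals the minimum base of~$G$ on~$\Omega$. For (a), the edge vertices of color~$j$ force any automorphism restricted to orbit $(i,a)$ to commute with the pairing $v \leftrightarrow v \oplus e_j$, hence to be a translation by some $u_{i,a} \in \F_2^{\card{S_{i,a}}}$; the sync pair for coordinate~$j$ is either fixed or swapped, and this choice must match the $j$-th bit of $u_{i,a}$ simultaneously for all orbits containing~$j$, so the translations $u_{i,a}$ are exactly the projections of a single global $u \in \F_2^{m+k}$. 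For (b), fixing any vertex of~$X$ zeroes a prescribed subset of coordinates of~$u$: a main vertex in orbit~$(i,a)$ covers $S_{i,a}$, an edge vertex of color~$j$ in that orbit covers $S_{i,a} \setminus \{j\}$, and $\ell_j^b$ covers $\{j\}$. A fixing set is therefore a set cover of~$[m+k]$; since every cover set of any of these three types is contained in some~$S_{i,a}$ (using that each clause index and each~$m+i$ appears in some~$S_{i,a}$), any mixed cover can be replaced by an $S_{i,a}$-only cover of no larger size. Chaining with Theorem~\ref{base-mini-hard} then yields minimum fixing number~$k$ if and only if the input \msat\ instance is satisfiable.

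The main obstacle is part~(a): ensuring $\Aut(X) = G$ exactly rather than the much larger direct product $\prod_{(i,a)} \F_2^{\card{S_{i,a}}}$ one would get without a cross-orbit gadget. The sync vertices enforce the required diagonal embedding, and the delicate point is that they must do so without creating a shortcut for the fixing number—this is precisely what the subset-replacement argument in~(b) rules out.
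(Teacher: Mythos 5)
Your construction is essentially the paper's: the sync pairs $\ell_j^0,\ell_j^1$ are exactly the paper's color classes $I_j=\{a_j^0,a_j^1\}$ (each $v\in\Omega$ joined to $a_j^{v_j}$ for $j\in S_{i,a}$), these alone already pin down $\Aut(X)\cong\F_2^{m+k}$ since a main vertex's image is determined by which sync pairs are flipped, and your subset-replacement step is the paper's observation that fixing a non-$\Omega$ vertex can be traded for fixing an $\Omega$-neighbor. The argument is correct; the edge vertices are harmless but redundant.
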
  
\begin{proof}
It suffices to encode the $\kbase$ instance constructed in the proof of
Theorem~\ref{base-mini-hard} as a $\kfixing$ instance $(X,k)$ with the
following properties. The graph~$X$ has $\card{\Omega}+2(m+k)$ vertices and
at most $\card{\Omega}(1+3\log n)$ edges.  Further, the above $\kbase$
instance has a base of size~$k$ if and only if the graph~$X$ has a
fixing set of size~$k$.

We explain the construction of~$X$. Let $l=m+k$.  The vertex set of~$X$
is $\Omega \cup I_1 \cup \cdots \cup I_l$ where each set
$I_j = \{a^0_j,a^1_j\}$ is a distinct color class of size~$2$.
The edge set of~$X$ is defined as follows. Let
$v=(b_1, \dots, b_p)\in\F_2^{\card{S_{i,a}}}$ be a vertex in~$\Omega$ and
let $S_{i,a}=\{i_1,i_2,\dots,i_p\}$ be the set of coordinates
occurring in~$v$. Then we connect~$v$ to the vertices~$a_{i_q}^{b_q}$,
for each $q=1, \dots, p$.  This finishes the construction of~$X$.

We claim a one-to-one correspondence between the permutation
group~$\F_2^{m+k}$ acting on~$\Omega$ and~$\Aut(X)$.  Indeed, any vector
$v = (b_1, \dots, b_l) \in \F_2^{m+k}$ can be associated with a unique
automorphism~$\sigma$ of~$X$ as follows. The automorphism~$\sigma$
flips the color class~$I_j$ if and only if $b_j=1$.
For a vertex $u\in\Omega$, define $\sigma(u)=v(u)$ using the action
of~$\F_2^{m+k}$ on~$\Omega$.
It is easy
to check that $\sigma$~respects the adjacencies inside~$X$. Note that the
action of an automorphism of~$X$ is determined by its action on
$I_1,\dots,I_l$, so this is a one-to-one correspondence.

Consequently, a set $J \subset \Omega$ is a base for the original
$\kbase$ instance if and only if $J$~is a fixing set for the graph~$X$.
We observe that we can always avoid fixing a vertex~$u$ inside
$I_1\cup \dots \cup I_l$ by instead fixing some neighbor of
$u\in\Omega$.  Therefore, the original $\kbase$ instance has a base of
size~$k$ if and only if the graph~$X$ has a fixing set of size~$k$.
\end{proof}  

We end this section with some consequences of our hardness proofs on
the approximability of the minimum base size of a group.  There is a
$\log\log n$ factor approximation algorithm due to Blaha~\cite{Blaha}
for the minimum base problem (in fact, a careful analysis yields a
$\ln \ln n$-factor approximation). In this connection we have an
interesting observation about the set cover problem instances that
arise in Theorem~\ref{base-mini-hard} (Claim~\ref{setcover}). A more
general version is the $B$-$\setcover$ problem: we are given a
collection of subsets of size at most~$B$ of some universe~$U$ and the
problem is to find a minimum size set cover. Trevisan~\cite{Trev01}
has shown that there is no approximation algorithm for this problem with
approximation factor smaller than $\ln B- O(\ln\ln B)$ unless
$\Pcl=\NP$. This leads us to the following theorem.

\begin{theorem}\label{th:blaha-optimal}
  The approximation factor of $\,\ln\ln n$ in Blaha's approximation
  algorithm for minimum base cannot be improved, even for elementary
  abelian $2$-groups, unless $\Pcl=\NP$.
\end{theorem}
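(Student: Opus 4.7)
The plan is to observe that the group-theoretic construction of Theorem~\ref{base-mini-hard} is in fact an approximation-preserving reduction from $B$-\setcover, and then to invoke Trevisan's inapproximability result.

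First, I would note that the gadget in the proof of Theorem~\ref{base-mini-hard} works for arbitrary set cover instances, not only those derived from \msat. Given any $B$-\setcover instance $(U, \mathcal{F})$, let $G = \F_2^{\card{U}}$ act on $\Omega = \bigsqcup_{S \in \mathcal{F}} \F_2^{\card{S}}$ by coordinate-restricted translation, exactly as in Theorem~\ref{base-mini-hard}. The same argument used to prove the final claim of that theorem shows that a set $V \subseteq \Omega$ is a base of $G$ if and only if $\{S \in \mathcal{F} : V \cap \F_2^{\card{S}} \ne \emptyset\}$ covers~$U$. Picking one point from each set in an optimal cover produces a base of the same size, while any base of size $k$ yields a cover of size at most $k$. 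Hence the minimum base size equals the minimum cover size, so the reduction preserves the optimum exactly.

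Second, I would balance parameters so that $\ln\ln n$ on the base side matches $\ln B$ on the set-cover side. The constructed domain has size $n = \sum_{S \in \mathcal{F}} 2^{\card{S}} \le \card{\mathcal{F}} \cdot 2^B$, and in the regime $B = \Theta(\log \card{\mathcal{F}})$ one has $\log n = \Theta(B)$ and hence $\ln \ln n = \ln B + O(1)$. Consequently, any polynomial-time $(1-\varepsilon)\ln\ln n$-approximation for minimum base would, through the reduction, yield a $((1-\varepsilon)\ln B + O(1))$-approximation for $B$-\setcover, contradicting Trevisan's $\ln B - O(\ln\ln B)$ lower bound for sufficiently large $B$ unless $\Pcl = \NP$.

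The main point to verify is that Trevisan's hardness is applicable in the regime $B = \Theta(\log \card{\mathcal{F}})$: this is the standard setting in which his bound is meaningful (for constant $B$ the quantity $\ln B$ is itself a constant and a trivial approximation exists), so no adaptation of his reduction is required. The double logarithm on the base side arises precisely from the exponential blow-up $S \mapsto \F_2^{\card{S}}$ used when embedding sets as group actions, and this blow-up is exactly what converts Trevisan's $\ln B$ inapproximability into the $\ln\ln n$ tightness of Blaha's algorithm, even when the group is restricted to be an elementary abelian $2$-group.
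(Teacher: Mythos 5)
Your proposal follows essentially the same route as the paper's own proof: observe that the reduction of Theorem~\ref{base-mini-hard} is optimum-preserving, note that it applies to arbitrary $(\log n)$-\setcover instances, and invoke Trevisan's lower bound. The paper states this in three terse sentences; your explicit parameter balancing ($n=\card{\Omega}\le\card*{\mathcal{F}}\cdot 2^B$ with $B=\Theta(\log\card*{\mathcal{F}})$, hence $\ln\ln n=\ln B+O(1)$) simply fills in the calculation the paper leaves implicit, and both arguments rest on the same (reasonable) assumption that Trevisan's hardness holds in that regime of~$B$.
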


\begin{proof}  
  The reduction from $(\log n)$-$\setcover$ to the minimum base problem that is
  explained in the proof of Theorem~\ref{base-mini-hard} preserves the optimal
  solution size. Furthermore, it is easy to see that this reduction carries over
  to all $(\log n)$-$\setcover$ instances. Combined with Trevisan's result, this
  completes the proof.
\end{proof}

\section{The number of non-fixed vertices as parameter}\label{sec:unfixed}

In this section we show that the problems $\nkfixing$ and $\nkbase$
are in~\FPT with running time~$k^{O(k^2)}n^{O(1)}$. We will show this
first for $\nkbase$. We need some permutation group theory.

Let $G\le \Sym(\Omega)$ be a permutation group acting on a
set~$\Omega$. The \emph{support} of a permutation $g\in G$ is
$\supp(g)=\set{i\in\Omega}{i^g\neq i}$. The \emph{orbit} of a point
$i\in\Omega$ is the set $i^G=\set{i^g}{g\in G}$. The group~$G$ is
\emph{transitive} if it has a single orbit in~$\Omega$. Let $G\le
\Sym(\Omega)$ be transitive. A subset $\Delta\subseteq\Omega$ is a
\emph{block} if for every $g\in G$ its
image~$\Delta^g=\set{i^g}{i\in\Delta}$ is either $\Delta^g=\Delta$ or
$\Delta^g\cap\Delta=\emptyset$. Clearly, $\Omega$~and singleton sets
are blocks for any~$G$. All other blocks are called nontrivial. A
transitive group~$G$ is \emph{primitive} if it has no nontrivial
blocks.

There are polynomial-time algorithms that take as input a generating
set for some $G\le \Sym(\Omega)$ and compute its orbits and maximal nontrivial blocks~\cite{Luks}.
We can test if $G$~is primitive in polynomial time.  If
$G$~is transitive on~$\Omega$ we can compute a maximal nontrivial
block~$\Delta_1$. It is easy to see that $\Delta_1^g$~is also a block
for each $g\in G$. This yields a partition of~$\Omega$ into blocks
(which are said to constitute a block system for~$G$):
$\Omega=\Delta_1\sqcup \Delta_2\sqcup\ldots\sqcup \Delta_\ell$.  The
group~$G$ acts transitively on the blocks
$\{\Delta_1,\Delta_2,\ldots,\Delta_\ell\}$. Furthermore, since these
are maximal blocks, the group action is primitive. The following
classic result is useful for our algorithm.

  \begin{lemma}\text{\textup{\cite[Lemma 3.3D]{DixonMortimer}}}\label{jordan}
    Suppose $G\le S_n$ is primitive and~$G$ is neither~$A_n$ nor~$S_n$
    itself. If there is an element $g\in G$ such that
    $\card{\supp(g)}\le k$, then $\card{\Omega}\le (k-1)^{2k}$.
  \end{lemma}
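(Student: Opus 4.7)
The plan is to prove this classical Jordan-type bound by constructing a $G$-invariant graph $\Gamma$ on $\Omega$ from the support of $g$, using primitivity to force $\Gamma$ to be connected, and then bounding $|\Omega|$ by a degree-and-diameter argument.

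First, I would set $\Delta := \supp(g)$ with $m := |\Delta| \le k$ (noting $m \ge 2$ since $g \ne 1$). Define the graph $\Gamma$ on vertex set $\Omega$ whose edges are the unordered pairs $\{\alpha,\beta\}$ contained in some translate $\Delta^h$ with $h \in G$. By construction $\Gamma$ is $G$-invariant, and since $|\Delta| \ge 2$ it has at least one edge (any two elements of $\Delta$). The partition of $\Omega$ into connected components of $\Gamma$ is $G$-invariant, hence forms a $G$-block system. By primitivity this system must be trivial; the option of all singletons is excluded because $\Gamma$ has an edge, so $\Gamma$ is connected.

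Next I would bound the degree $d$ and diameter $D$ of $\Gamma$ in terms of $k$. The target bounds are $d \le k-1$ and $D \le 2k$; a Moore-type count $|\Omega|\le 1+d+d(d-1)+\cdots+d(d-1)^{D-1}$ then yields $|\Omega|\le (k-1)^{2k}$. For the diameter, any two points lying in a common translate $\Delta^h$ are at distance at most $m-1$ in $\Gamma$ (by a short path through $\Delta^h$), and connectivity with small overlaps between consecutive translates should give $D \le 2m \le 2k$. For the degree, each vertex $\alpha$ contributes at most $m-1$ neighbors per translate of $\Delta$ through $\alpha$, so what has to be shown is that only few translates of $\Delta$ pass through any given point.

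The main obstacle will be this degree bound, and it is exactly the step where the assumption $G \not\supseteq \Alt(\Omega)$ enters crucially. Indeed, if $G$ equalled $\Alt(\Omega)$ or $\Sym(\Omega)$, every $m$-subset of $\Omega$ would be a translate of $\Delta$, so each vertex would lie in $\binom{|\Omega|-1}{m-1}$ translates and no bound on $|\Omega|$ would follow. Ruling this out for general primitive $G$ requires genuine structural input: I would attempt an inductive descent through a chain of point stabilizers $G > G_{\alpha_1} > G_{\alpha_1, \alpha_2} > \cdots$, producing at each stage a new small-support element by conjugating $g$ and restricting, and then applying the primitivity (or at least transitivity) of the residual action together with an analysis of the cycle structure of $\langle g\rangle$ inside $\Delta$. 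This is the technical heart of the argument and presumably the content of the Dixon-Mortimer proof of Lemma~3.3D.
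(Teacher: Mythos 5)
First, a point of comparison: the paper does not prove this lemma at all --- it is quoted from Dixon and Mortimer --- so there is no in-paper argument to measure yours against, and your proposal has to stand on its own.

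It does not. The part you actually carry out --- defining the translate graph $\Gamma$ and deducing its connectivity from primitivity --- is correct and standard, but it uses neither the hypothesis $G\not\ge\Alt(\Omega)$ nor yields any bound on $\card{\Omega}$. All of the quantitative content of the lemma resides in the two bounds you call ``targets,'' namely $d\le k-1$ and $D\le 2k$, and you prove neither; for the degree bound you explicitly defer to an unspecified ``inductive descent.'' This is not a deferrable verification, because the degree bound is false as an unconditional statement about $\Gamma$. Take $G=S_5$ acting on the ten $2$-subsets of $\{1,\dots,5\}$: this action is primitive, $G$ is neither $A_{10}$ nor $S_{10}$, and the image of a transposition has support of size $k=6$ (the six pairs meeting $\{1,2\}$ in one point). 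For any two distinct $2$-subsets $P,Q$ there is a $2$-subset $\{a,b\}$ meeting each of them in exactly one point, so every pair $\{P,Q\}$ lies in a common translate of the support; hence $\Gamma=K_{10}$ and every vertex has degree $9>k-1=5$. So the bound you need can only be a conditional one (valid under the contrapositive assumption $\card{\Omega}>(k-1)^{2k}$, say), and proving it in that form --- that a large primitive group in which some point lies in many essentially different translates of $\supp(g)$ must contain $\Alt(\Omega)$ --- is essentially the entire theorem. The diameter bound is in the same state: connectivity of a vertex-transitive graph gives no diameter bound depending only on $k$, and in fact if your degree bound held then any such graph on more than $(k-1)^{2k}$ vertices would automatically have diameter exceeding $2k$, so the two estimates cannot be obtained independently but only jointly, inside a proof by contradiction. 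As written, your submission is a plausible plan whose technical heart is missing; the known proofs (including the one in Dixon--Mortimer) supply exactly that missing heart by quite different means, manufacturing new group elements of controlled support via conjugation and commutators and inducting, rather than by a degree-and-diameter count on $\Gamma$.
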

Here, $A_n=\Alt([n])$ denotes the subgroup of~$S_n$ that consists of
those permutations that can be written as the product of an even
number of transpositions.

\begin{theorem}\label{basenk}
There is a $k^{O(k^2)}+k\,n^{O(1)}$ time algorithm for the $\nkbase$ 
problem.
\end{theorem}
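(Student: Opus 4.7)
The plan is a two-phase algorithm for \nkbase: a polynomial-time reduction to a kernel of size $k^{O(k)}$, followed by brute-force enumeration of all $\binom{k^{O(k)}}{k}=k^{O(k^2)}$ candidate co-bases. Equivalently, I want a set $T\subseteq[n]$ with $|T|=k$ such that no nontrivial $g\in G$ has $\supp(g)\subseteq T$; in particular, only elements $g\in G$ with $|\supp(g)|\le k$ can obstruct a candidate~$T$, and~$T$ can meet at most~$k$ of the orbits of~$G$.

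For the kernelization I would first compute the orbits $\Omega_1,\dots,\Omega_r$ of~$G$ in polynomial time. For each orbit~$\Omega_i$ with $|\Omega_i|>(k-1)^{2k}$, I would compute its maximal block system and apply Lemma~\ref{jordan} to the primitive action on the blocks. If this primitive action is neither~$A_m$ nor~$S_m$, Jordan implies that no nontrivial element of $G|_{\Omega_i}$ has support of size at most~$k$, and hence such an orbit can be safely excluded from~$T$ and replaced by a canonical witness of constant size. If the primitive action is $A_m$ or $S_m$ on $m$ blocks, then either $m\ge k$ (in which case an $A_m$/$S_m$-style co-base within this single orbit already provides saving at least~$k$ points in the base of $G|_{\Omega_i}$, which propagates to a size-$(n-k)$ base for~$G$ via $b(G)\le\sum_i b(G|_{\Omega_i})$, so we answer YES), or $m<k$, and recursion inside a single block, governed again by Lemma~\ref{jordan}, bounds the block size by~$k^{O(k)}$, hence $|\Omega_i|\le m\cdot k^{O(k)}=k^{O(k)}$. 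Since only at most~$k$ orbits can meet~$T$, the combined kernel has size $k\cdot k^{O(k)}=k^{O(k)}$ and is built in time $O(k\cdot n^{O(1)})$.

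On the kernel I would enumerate all $k^{O(k^2)}$ size-$k$ subsets~$T$ and, for each, use Schreier-Sims in polynomial time to verify whether the pointwise stabilizer $G_{[n]\setminus T}$ is trivial. The main obstacle will be carefully compressing orbits whose primitive block action is alternating or symmetric on $m<k$ blocks: the compression must preserve cross-orbit interactions, since a single $g\in G$ can act nontrivially on several orbits simultaneously. The saving grace is that any relevant $g\in G_{[n]\setminus T}$ has total support $|\supp(g)|\le k$, so its restriction to each compressed orbit remains tightly controlled by Lemma~\ref{jordan}, which should let me use canonical witnesses that faithfully represent the small-support structure of~$G$ across all orbits at once.
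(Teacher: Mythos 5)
Your overall architecture (shrink the domain to size $k^{O(k)}$ in polynomial time, then brute-force over $k^{O(k^2)}$ candidate co-bases with a Schreier--Sims triviality check) matches the paper's, and the opening observation that only elements of support at most~$k$ can obstruct a candidate~$T$ is exactly right. But the kernelization has two genuine gaps, both coming from misapplying Lemma~\ref{jordan}. First, applying Jordan to the primitive action on a maximal block system only controls elements that \emph{move} blocks; it says nothing about elements that stabilize every block setwise yet have small support inside the blocks. For $G=\mathbb{Z}_2\wr P$ with $P$ primitive and neither alternating nor symmetric, the block action is~$P$, yet $G$ has elements of support~$2$; so your conclusion that such an orbit carries no nontrivial small-support element and can be discarded is false. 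Second, when the block action is $A_m$ or $S_m$ with $m<k$, Jordan does \emph{not} bound the block size: the lemma bounds a primitive domain only when the action is neither alternating nor symmetric \emph{and} a small-support element exists, whereas the group induced inside a block can be the full symmetric group on an arbitrarily large block (take $G=S_t\wr S_2$ on $2t$ points: $m=2$ blocks of unbounded size~$t$, with transpositions of support~$2$). So no bound of the form $|\Omega_i|\le k^{O(k)}$ holds for such orbits, and no ``canonical witness'' can compress them away --- these orbits genuinely contribute up to~$k$ co-base points.

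The paper resolves exactly these two situations by \emph{changing the group} rather than compressing the domain. If an orbit larger than $k^{2k}$ is imprimitive with at most~$k$ maximal blocks, every size-$(n-k)$ base meets every block, so $G$ may be replaced by the setwise stabilizer of all blocks (computed via Schreier--Sims); this strictly increases the orbit count, so it recurses at most~$k$ times. If the action induced on a large orbit is $\Sym$ or $\Alt$, one fixes all but~$k$ points of that orbit (justified by high transitivity) and replaces~$G$ by the pointwise stabilizer. Only after these reductions are all orbits of size at most $k^{2k}$, yielding the kernel; and because $G$ is replaced by explicit subgroups throughout, the cross-orbit interactions you flag as ``the main obstacle'' never arise. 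You also need the easy case that $k$ or more orbits immediately gives a YES (one point per orbit), without which ``keep only the $\le k$ relevant orbits'' is not an effective step. As written, the case analysis does not establish the claimed kernel.
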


\begin{proof}
  Let $G\le S_n$ be the input group given by a generating set and let~$k$
  be the parameter. We call a set $S\subseteq [n]$ a
  \emph{co-base} for~$G$, if $[n]\setminus S$ is a base for~$G$.  The
  algorithm finds a co-base~$S$ of size~$k$ if it exists. During its
  execution, the algorithm may decide to fix some points. Since in
  this case the actual group~$G$ is replaced by the pointwise
  stabilizer subgroup, there is no need to store these points.
The algorithm proceeds as follows.

\begin{enumerate}
\item Let $O_1,O_2,\ldots,O_\ell$ be the orbits of the group~$G$. If
  $\ell\ge k$ then the set~$S$ obtained by picking one point from each
  of the orbits $O_1,O_2,\ldots,O_k$ is a co-base for~$G$.

\item Suppose $\ell<k$, and there is an orbit~$O_i$ of size more
  than~$k^{2k}$ on which $G$'s~action is not primitive.  In this case
  compute a maximal block system of~$G$ in~$O_i$,
  $O_i=\Delta_{i1}\sqcup\ldots\sqcup \Delta_{ir_i}$, and deal with the
  following subcases:
\begin{enumerate}
\item If $r_i>k$, then the set~$S$ obtained by picking one point from
  each block $\Delta_{i1},\ldots,\Delta_{ik}$ is a co-base for~$G$.
\item If $r_i\le k$, then each block~$\Delta_{ij}$ is of size at
  least~$k^{2k-1}$ which is strictly more than~$k$. Thus any $n-k$ sized subset
  of~$[n]$ intersects each block~$\Delta_{ij}$ and hence the support of
  any permutation that moves the blocks. Let~$H$ be the subgroup of~$G$
  that setwise stabilizes all blocks~$\Delta_{ij}$. The subgroup~$H$
  can be computed from~$G$ in polynomial time using the
  Schreier-Sims algorithm~\cite{Luks}. Replace~$G$ by~$H$ and go to
  Step~1. This step is invoked at most $k$~times since each invocation
  increases the number of orbits.
\end{enumerate}

\item Suppose $\ell<k$, and there is an orbit~$O_i$ of size more
  than~$k^{2k}$ such that~$G$ is primitive on~$O_i$, but different from
  $\Sym(O_i)$~and~$\Alt(O_i)$. Then any $k$~points of~$O_i$ form a
  co-base for~$G$ (by Lemma~\ref{jordan}).

\item Suppose there is an orbit~$O_i$ of size more than~$k^{2k}$ such
  that $G$~restricted to~$O_i$ is either~$\Sym(O_i)$ or~$\Alt(O_i)$.
  Then fix the first $\card{O_i}-k$ elements of~$O_i$ (the
  choice of the subset of points fixed does not matter as both
  $\Sym(O_i)$~and~$\Alt(O_i)$ are $t$-transitive for $t\le
  \card{O_i}-2$). Replace~$G$ by the subgroup~$H$ that fixes the first
  $\card{O_i}-k$ elements of~$O_i$ and go to Step~1. This step is 
  invoked at most once.

\item This step is only reached if all orbits are of size at most~$k^{2k}$,
  implying that the entire domain size is at most~$k^{2k+1}$.
  Hence, the algorithm can find a co-base~$S$ of size~$k$
  by brute-force search in $k^{O(k^2)}$~time if it exists.
\end{enumerate}
The brute-force computation (done in the last step), when the search
space is bounded by~$k^{2k+1}$, costs~$k^{O(k^2)}$. The rest of the
computation uses the standard group-theoretic algorithms~\cite{Luks}
whose running time is polynomially bounded by~$n$. Therefore, the
overall running time of the algorithm is bounded 
by~$k^{O(k^2)}+k\,n^{O(1)}$.

We note that the algorithm is in fact a kernelization algorithm.  It
computes in $n^{O(1)}$~time a kernel of size~$k^{2k+1}$ (where size
refers to the size of the domain on which the group acts).
\end{proof}  
We now show the main result of this section, i.e., that $\nkfixing$
is in \FPT.

\begin{theorem}\label{fixingnk}
There is a $k^{O(k^2)}n^{O(1)}$ time algorithm for the $\nkfixing$ problem.
\end{theorem}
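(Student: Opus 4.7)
The plan is to lift the algorithm of Theorem~\ref{basenk} from permutation groups to colored graphs, using color refinement as a combinatorial substitute for orbit computation on $\Aut(X)$. The key fact I would exploit is that the stable coloring refines the $\Aut(X)$-orbit partition: every color class of the stable coloring is a disjoint union of $\Aut(X)$-orbits.

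First I would compute the stable coloring with classes $C_1, \ldots, C_r$. If $r \geq k$, output a set $S$ consisting of one vertex from each of $k$ distinct classes. Any nontrivial $\sigma \in \Aut(X)$ with $\supp(\sigma) \subseteq S$ must preserve color classes, so it sends each selected vertex to another selected vertex in the same class; since each class contains at most one selected vertex, $\sigma$ would fix $S$ pointwise, contradicting nontriviality. Hence $S$ is a valid co-fixing set.

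Otherwise $r < k$, and I would iteratively handle the large color classes by mimicking the primitive/imprimitive case analysis of Theorem~\ref{basenk}. For a color class $C$ with $|C| > k^{2k}$, I individualize a single vertex $v \in C$ and re-run color refinement to probe the induced action of $\Aut(X)$ on $C$: if this splits $C \setminus \{v\}$ into many sub-classes we recover blocks or new orbits and reduce as in Steps~1--2 of Theorem~\ref{basenk}; otherwise Lemma~\ref{jordan} lets us conclude that either any $k$ vertices of $C$ form a co-fixing set (the non-$\Sym/\Alt$ primitive case), or the restricted action contains $\Alt(C)$, in which case it suffices to keep $k+2$ representative vertices of $C$. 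Each round either decides the instance or strictly refines the color partition, so after $O(k)$ rounds the graph has been kernelized to $\leq k^{O(k)}$ vertices. I finish by brute-forcing over the $k^{O(k^2)}$ candidate $k$-subsets $S$ of the kernel, verifying each in $k! \cdot n^{O(1)}$ time by enumerating permutations of $S$ and checking whether any extension by the identity on $V \setminus S$ is a nontrivial automorphism of $X$; this fits within the claimed $k^{O(k^2)} n^{O(1)}$ budget.

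The main obstacle will be distinguishing, without oracle access to $\Aut(X)$, whether its action on a large color class contains $\Alt(C)$ or is only a smaller primitive group: color refinement alone can be strictly coarser than the orbit partition, so the correspondence between refinement splits and genuine block systems is not automatic. Closing this gap is the core reason the graph version is harder than the pure group version in Theorem~\ref{basenk}, and it will likely require either running Luks-style bounded-color-class-size automorphism algorithms on the induced subgraph $X[C]$ to certify the primitive/imprimitive structure, or leveraging the cheap $k! \cdot n^{O(1)}$ verification step to absorb the uncertainty inside the brute-force phase.
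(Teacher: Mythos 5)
Your opening step (if the stable coloring has at least $k$ classes, pick one vertex from each of $k$ classes) is sound, since every automorphism preserves the stable color classes. But the heart of your algorithm---the treatment of a large color class $C$ by ``probing the induced action of $\Aut(X)$ on $C$'' via individualization and refinement---has a genuine gap that you identify but do not close, and it is not closable along the lines you suggest. A color class of the stable coloring is in general only a \emph{union} of $\Aut(X)$-orbits and can be arbitrarily coarser than the orbit partition (the graph may even be rigid while color refinement leaves huge classes); so the splits produced by individualize-and-refine certify nothing about orbits, blocks, or primitivity of the automorphism group's action on $C$, and Lemma~\ref{jordan} cannot be invoked. Running an automorphism algorithm on the induced subgraph $X[C]$ does not help either: $\Aut(X)|_C$ is not determined by $X[C]$, and $C$ has unbounded color class size, so Luks-type bounded-color-class machinery does not apply. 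Consequently the claimed kernelization to $k^{O(k)}$ vertices is unsupported, and the final brute force ranges over an unbounded search space. (Your observation that a single candidate $k$-set $S$ can be verified in $k!\cdot n^{O(1)}$ time by checking which permutations of $S$ extend by the identity to automorphisms is correct, but it does not rescue the argument without a small certified candidate family.)

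The paper closes exactly this gap by a different route. The key observations are: (i) any $(n-k)$-element set meets the support of every automorphism with $\card{\supp(\sigma)}>k$, so only automorphisms of support at most $k$ are relevant; (ii) every such automorphism has its support containing the support of a \emph{minimal support} automorphism with support at most $k$; and (iii) Schweitzer's algorithm computes the set $M$ of all minimal support automorphisms with support at most $k$ in $k^{O(k)}n^{O(1)}$ time, without any Graph Isomorphism oracle. One then lets $G'=\langle M\rangle$, notes that an $(n-k)$-subset is a base for $\Aut(X)$ iff it is a base for $G'$, and applies Theorem~\ref{basenk} to $G'$. If you want to salvage your plan, you should replace the color-refinement probing phase with some such mechanism that actually produces generators of a subgroup of $\Aut(X)$ capturing all small-support automorphisms; that is the missing idea.
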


\begin{proof}
  Let $X=(V,E,c)$ be a colored $n$-vertex graph and $k$~as parameter be an
  instance of $\nkfixing$. If we can use a subroutine for the Graph
  Isomorphism problem then we can compute a generating set for the
  automorphism group~$\Aut(X)$ of~$X$ with polynomially many calls to
  this subroutine~\cite{Mathon}. With this generating set as input we
  can then run the algorithm of Theorem~\ref{basenk} to compute an
  $(n-k)$ size fixing set for~$X$, if it exists, in time~$k^{O(k^2)}n^{O(1)}$.

  However, it turns out that we can avoid using the Graph Isomorphism
  subroutine and still solve the problem in $k^{O(k^2)}n^{O(1)}$~time
  with the following observations:

\begin{enumerate}
\item We note that any set of size $n-k$ will intersect the support of
  any element $\sigma\in \Aut(X)$ if $\card{\supp(\sigma)}>k$. Thus, we only
  need to collect all elements of support bounded by~$k$.
\item An automorphism $\sigma\in \Aut(X)$ is defined to be a
  \emph{minimal support} automorphism of~$X$ if there is no nontrivial
  automorphism $\varphi\in \Aut(X)$ such that $\supp(\varphi)\subsetneq\supp(\sigma)$.
  For any nontrivial automorphism $\pi\in \Aut(X)$ such that
  $\card{\supp(\pi)}\le k$, there is a minimal support automorphism $\varphi\in
  \Aut(X)$ such that $\card{\supp(\varphi)}\le k$ and $\supp(\varphi)\subseteq
  \supp(\pi)$.
 \item We observe that Schweitzer's algorithm in~\cite{Schweitzer}
  can be used to compute, in $k^{O(k)}n^{O(1)}$~time, the set~$M$ of all minimal
  support automorphisms $\sigma\in \Aut(X)$ such that
  $\card{\supp(\sigma)}\le k$.
\item Let $G'$ be the subgroup of~$\Aut(X)$ generated by~$M$. It follows from
  the above discussion that an $n-k$ sized subset of~$V$ is a base for~$\Aut(X)$
  (and thus a fixing set for~$X$) if and only if it is a base for~$G'$. We can
  apply the algorithm of Theorem~\ref{basenk} to compute such a base if it
  exists.
  \qedhere
\end{enumerate}
\end{proof}

\section{The number of individualized vertices as parameter}\label{sec:indiv}

In this section, we show that the problem \IndC is \WP-hard for all
classes~$\mathcal{C}$ of the color refinement
hierarchy~\eqref{eq:cr-hier}. To this end, we give a reduction from
\WMCS, which is known to be \WP-complete~\cite{ADF95}.

\begin{problem}{\WMCS}
 \item[Input:] A monotone boolean circuit~$C$ on~$n$ inputs and an integer~$k$
 \item[Parameter:] $k$
 \item[Question:] Is there an assignment $x\in\{0,1\}^n$ of Hamming weight~$k$
  so that $C(x)=1$?
\end{problem}

\begin{theorem}\label{th:wp-hard}
  For all classes~$\mathcal{C}$ of the color refinement
  hierarchy~\eqref{eq:cr-hier}, \IndC is $\WP$-hard, even for graphs of color
  class size at most~$4$.
\end{theorem}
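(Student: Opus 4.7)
The plan is to give an FPT reduction from~\WMCS to~\IndC. Given a monotone boolean circuit~$C$ on~$n$ inputs with parameter~$k$, we build a colored graph~$X_C$ of color class size at most~$4$ together with a new parameter~$k'$ depending only on~$k$, such that a size-$k'$ individualization set places $X_C$ into~\gclass{Discrete} whenever $C$~has a weight-$k$ satisfying assignment, and \emph{no} size-$k'$ individualization leaves $X_C$ in~\gclass{Refinable} otherwise. Because \gclass{Discrete} is the smallest and \gclass{Refinable} the largest class in the hierarchy~\eqref{eq:cr-hier}, a single reduction of this form establishes \WP-hardness for all four classes~$\mathcal{C}$ simultaneously: the yes-instance lies in every class, while the no-instance lies in none.

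The construction is gadget based. For each input variable~$x_i$ we attach an \emph{input gadget} containing a distinguished vertex whose individualization encodes the choice $x_i=1$; the color classes in the rest of the graph are set up so that only individualizations of these distinguished vertices are ``useful'', which forces every size-$k'$ individualization set to correspond to a weight-$k$ assignment to the inputs. For each wire and each AND/OR gate of~$C$ we use a \CFI-style twist gadget: pairs of vertices in the gadget share refined colors but can be swapped by an automorphism only in a controlled, parity-constrained way. The gadgets are wired so that the twists at a gate~$g$ are resolved by color refinement precisely when the truth value of~$g$ under the chosen inputs is~$1$; once the output gate evaluates to~$1$, the twists everywhere cancel and the stable coloring becomes discrete.

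Two opposing requirements drive the design and form the main technical hurdle. First, every gadget must have color class size at most~$4$, which rules out naive propagation constructions and essentially forces a \CFI-like encoding of both wire values and gate semantics. Second, when the selected inputs do not satisfy~$C$, at least one gadget must retain a pair of refinement-equivalent vertices that are \emph{not} related by any automorphism of the individualized graph; this is what pushes~$X_C$ outside~\gclass{Refinable}, and relies on the familiar parity argument that an odd residue of twists cannot be realized by a genuine automorphism. Once these invariants are verified separately for the monotone AND and OR gadgets and composed through the wires, polynomial-time constructibility of~$X_C$ and the bound $k'=O(k)$ follow routinely, completing the FPT reduction.
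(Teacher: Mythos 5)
Your high-level strategy is the same as the paper's: one reduction from \WMCS whose yes-instances land in \gclass{Discrete} and whose no-instances fall outside \gclass{Refinable}, so that hardness for all four classes follows at once, with \CFI-type gadgets simulating the gates under color refinement. However, two steps that you treat as routine are exactly where the real work lies, and as stated your argument has gaps there.

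First, you assert that the color classes can be ``set up so that only individualizations of the distinguished input vertices are useful.'' In the natural single-copy construction this is false: individualizing one vertex of the pair~$P_g$ associated with an internal gate~$g$ splits that pair immediately, letting refinement proceed as if $g$ evaluated to~$1$ even though its inputs do not force it; likewise individualizing inner gadget vertices can split gadget color classes. No choice of coloring with classes of size at most~$4$ prevents this, since any individualized vertex breaks its own class. The paper resolves this by taking $n$~disjoint copies of the circuit graph, all fed by the same input pairs and all feeding back into them: a set~$U$ of $k$~individualized vertices can directly touch at most $k$~of the copies, an untouched copy can only be affected through the input pairs, and \emph{monotonicity} of~$C$ then converts the set of touched indices into a weight-$\le k$ satisfying assignment. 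Some idea of this kind is indispensable and is missing from your proposal. Second, your claim that ``once the output gate evaluates to~$1$, the twists everywhere cancel and the stable coloring becomes discrete'' does not hold by itself: input pairs corresponding to variables set to~$0$ (and the gates depending on them) are never distinguished by forward propagation. The paper adds implication gadgets from the output pair back to every input pair so that distinguishing the output retroactively distinguishes all inputs and hence the whole graph. Without this feedback your yes-instances need not be in \gclass{Discrete}, only the output pair is guaranteed to split. The remaining ingredients you describe (the \CFI gadget as \textsc{and}, a matching-based implication gadget for \textsc{or}, and the automorphism-parity argument showing the no-instances are not refinable) do match the paper.
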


\begin{proof}
  We will give a parameter-preserving reduction that maps positive
  instances of \WMCS to positive instances of \IndDiscrete, while
  negative instances are mapped to negative instances of
  \IndRefinable. A similar reduction was used to show that the classes
  from the color refinement hierarchy~\eqref{eq:cr-hier} are all
  \Pcl-hard~\cite{AKRV}, which in turn builds on ideas of
  Grohe~\cite{Gro99}.
  
  Let $\pair{C,k}$ be the given instance of \WMCS, and let~$n$ be the
  number of inputs of the circuit~$C$. We define a graph~$X_C$. For each
  gate~$g_k$ of~$C$ (including the input gates), $X_C$~contains a
  vertex pair $P_k=\{v_k,v'_k\}$, which forms a color class. If a pair
  corresponds to an input gate, we call it an \emph{input pair.} The
  intention is that setting an input~$g_i$ to~$1$ corresponds to
  individualizing the vertex~$v_i$; we will add gadgets to~$X_C$ so
  that after color refinement it holds also for each non-input
  gate~$g_k$ that $g_k=1$ if and only if $v_k$~and~$v'_k$ have
  different colors.

  To achieve this, we use the gadgets given in
  Figure~\ref{fig:gadgets}. The basic building block is the gadget
  $\CFI(P_i,P_j,P_k)$ introduced by Cai, Fürer, and
  Immerman~\cite{CFI92}. It connects the three pairs
  $P_i$,~$P_j$,~and~$P_k$ using four additional vertices as depicted.
  These four vertices form a color class~$F$; each instance of the
  gadget uses its own copy of~$F$. This gadget has the property that
  every automorphism flips either none or exactly two of the pairs
  $P_i$,~$P_j$~and~$P_k$; thus the \CFI-gadget implements the
  \textsc{xor}~function in the sense that any automorphism must
  flip~$P_k$ if and only if it flips exactly one of $P_i$~and~$P_j$.
  In our case, however, the \CFI-gadget implements the
  \textsc{and}~function: If both $P_i$~and~$P_j$ are distinguished
  (either by direct individualization or in previous rounds of color
  refinement), the vertices of the inner color class~$F$ and
  consequently~$P_k$ will be distinguished in two rounds of color
  refinement. Conversely, if at most one of the pairs $P_i$~and~$P_j$
  is distinguished, then the color class~$F$ is split into two color
  classes of size~$2$ and color refinement stops at this point,
  leaving the other two pairs non-distinguished. For each
  \textsc{and}~gate $g_k=g_i\land g_j$ in~$C$, we add the gadget
  $\CFI(P_i,P_j,P_k)$ to~$X_C$.

  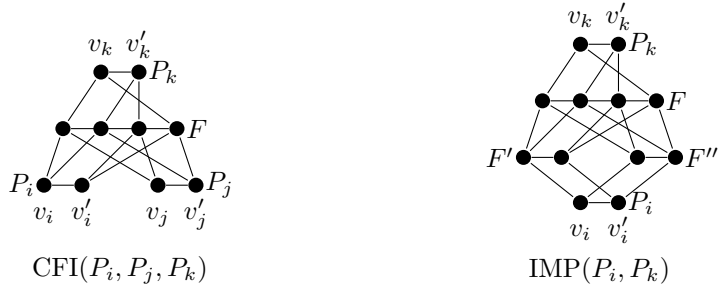
\begin{figure}
    \centering
    \begin{tikzpicture}[y=1.5cm]
      \tikzset{gv/.style={circle,fill,inner sep=2pt},
        every edge/.style={draw,shorten <=.5pt,shorten >=.5pt},
        every label/.style={text height=height("$v'_k$"),text depth=depth("$v'_k$"),inner sep=2pt},
      }
      \path
      (0,-.75) node {$\CFI(P_i,P_j,P_k)$}
      (-1,0) node[gv,label=below:$v_i$] (vi) {}
      (-.5,0) node[gv,label=below:$v'_i$] (vpi) {}
      (.5,0) node[gv,label=below:$v_j$] (vj) {}
      (1,0) node[gv,label=below:$v'_j$] (vpj) {}
      (-3/4,.5) node[gv] (f00) {} edge (vi) edge (vj)
      (-1/4,.5) node[gv] (f01) {} edge (vi) edge (vpj)
      (1/4,.5) node[gv] (f10) {} edge (vpi) edge (vj)
      (3/4,.5) node[gv] (f11) {} edge (vpi) edge (vpj)
      (-1/4,1) node[gv,label=above:$v_k$] (vk) {} edge (f00) edge (f11)
      (1/4,1) node[gv,label=above:$v'_k$] (vpk) {} edge (f01) edge (f10);
      \begin{scope}[on background layer={gray!50!white,
          line cap=round,line width=10pt,shorten <=-3pt,shorten >=-3pt,
        every node/.style={black,text depth=0pt,inner sep=2pt}}]
        \draw (vi) -- (vpi) node[at start,left=3pt] {$P_i$};
        \draw (vj) -- (vpj) node[right] {$P_j$};
        \draw (vk) -- (vpk) node[right] {$P_k$};
        \draw (f00) -- (f11) node[right] {$F$};
      \end{scope}
    \end{tikzpicture}\hfil
    \begin{tikzpicture}[y=1.5cm]
      \tikzset{gv/.style={circle,fill,inner sep=2pt},
        every edge/.style={draw,shorten <=.5pt,shorten >=.5pt},
        every label/.style={text height=height("$v'_k$"),text depth=depth("$v'_k$"),inner sep=2pt},
      }
      \path
      (0,-.5) node {$\IMP(P_i,P_k)$}
      (-.25,.1) node[gv,label=below:$v_i$] (vi) {}
      (.25,.1) node[gv,label=below:$v'_i$] (vpi) {}
      (-1,.5) node[gv] (wi) {} edge (vi)
      (-.5,.5) node[gv] (wpi) {} edge (vpi)
      (.5,.5) node[gv] (vj) {} edge (vi)
      (1,.5) node[gv] (vpj) {} edge (vpi)
      (-3/4,1) node[gv] (f00) {} edge (wi) edge (vj)
      (-1/4,1) node[gv] (f01) {} edge (wi) edge (vpj)
      (1/4,1) node[gv] (f10) {} edge (wpi) edge (vj)
      (3/4,1) node[gv] (f11) {} edge (wpi) edge (vpj)
      (-1/4,1.5) node[gv,label=above:$v_k$] (vk) {} edge (f00) edge (f11)
      (1/4,1.5) node[gv,label=above:$v'_k$] (vpk) {} edge (f01) edge (f10);
      \begin{scope}[on background layer={gray!50!white,
          line cap=round,line width=10pt,shorten <=-3pt,shorten >=-3pt,
        every node/.style={black,text depth=0pt,inner sep=2pt}}]
        \draw (vi) -- (vpi) node[right] {$P_i$};
        \draw (wi) -- (wpi) node[at start,left=3pt] {$F'$};
        \draw (vj) -- (vpj) node[right] {$F''$};
        \draw (f00) -- (f11) node[right] {$F$};
        \draw (vk) -- (vpk) node[right] {$P_k$};
      \end{scope}
    \end{tikzpicture}
    \caption{Gadgets used in the reduction of Theorem~\ref{th:wp-hard}}
    \label{fig:gadgets}
  \end{figure}

  The second gadget we use is $\IMP(P_i,P_k)$. It consists of the gadget
  $\CFI(F',F'',P_k)$, where $F'$~and~$F''$ are vertex pairs that form color
  classes of size two, and perfect matchings that connect these pairs to~$P_i$;
  see Fig.~\ref{fig:gadgets}. Again, each instance of this gadget gets its own
  copy of the color classes $F$,~$F'$~and~$F''$. There is an automorphism
  of~$\IMP(P_i,P_k)$ that flips the vertices in~$P_i$, but none that flips the
  vertices in~$P_k$. In the color refinement setting, this gadget implements the
  \textsc{implication}~function: When~$P_i$ is distinguished, this will
  propagate to both~$F'$~and~$F''$, and consequently also to~$F$ and~$P_k$.
  Conversely, distinguishing~$P_k$ will only split~$F$ into two color classes of
  size~$2$ before color refinement stops. For each \textsc{or}~gate $g_k=g_i\lor
  g_j$ in~$C$, we add the gadgets $\IMP(P_i,P_k)$ and $\IMP(P_j,P_k)$ to~$X_C$.
  For the output gate~$g_\ell$ of~$C$, we add a second vertex pair~$Q$ and the
  gadget~$\IMP(P_\ell,Q)$ to~$X_C$.

  Our above analysis of the gadgets ensures that the following
  invariant holds when running color refinement on~$X_C$ after
  individualizing a subset of its input pairs: For each
  \textsc{implication} gadget $\IMP(P_i,P_k)$ in~$X_C$ the pair~$P_k$
  can only be distinguished if $P_i$~is distinguished, and for each
  \textsc{and} gadget $\CFI(P_i,P_j,P_k)$ the pair~$P_k$ can only be
  distinguished if both $P_i$~and~$P_j$ are distinguished. This
  implies the following.
  \begin{claim}
    Running color refinement on~$X_C$ after individualizing some input
    pairs will distinguish exactly those pairs~$P_k$ for which the
    gate~$g_k$ evaluates to~$1$ under the assignment that sets exactly
    those input gates to~$1$ whose corresponding pairs were initially
    individualized.
  \end{claim}

  Let~$X'_C$ be the graph that is obtained from~$X_C$ by adding
  implication gadgets from~$Q$ to each pair~$P_i$ that corresponds to
  an input gate~$g_i$. If $C$~has a satisfying
  assignment~$x\in\{0,1\}^n$ of weight~$k$, individualizing the
  vertices~$v_i$ with $x_i=1$ and subsequently running color
  refinement will assign distinct colors to all vertices of~$X_C$.
  Indeed, the gadgets of~$X_C$ ensure that the pair~$Q$ becomes
  distinguished, the additional gadgets in~$X'_C$ propagate this to
  all input pairs~$P_i$, and the gates of~$X_C$ in turn make sure that
  all remaining color classes become distinguished. Conversely, if
  $C$~does not have a weight~$k$ satisfying assignment, there is no
  way to individualize $k$~input vertices such that color refinement
  distinguishes~$Q$. However, we already noted that there is no
  automorphism that transposes the output pair of the $\IMP(P_\ell,Q)$
  gadget, so no way of individualizing $k$~input vertices makes~$X'_C$
  refinable.

  In~$X'_C$, it always suffices to individualize one vertex from~$Q$
  to make it discrete. To drop the assumption that each of the
  $k$~individualized vertices must correspond to an input gate, we
  construct a graph~$X''_C$. It consists of $n$~input
  pairs~$P_i=\{v_i,v'_i\}$ and $n$~copies of~$X_C$, to which we will
  refer to as $X_C^{(1)},\dotsc,X_C^{(n)}$. We also add the gadgets
  $\IMP(P_i,P_i^{(j)})$ for all $i,j\in\{1,\dotsc,n\}$ and the gadgets
  $\IMP(Q^{(i)},P_i)$ for all $i\in\{1,\dotsc,n\}$.
    We will show that $\pair{C,k}\mapsto\pair{X''_C,k}$ is the desired
    reduction.

    Individualizing $k$~input vertices that correspond to a satisfying
    assignment makes~$X''_C$ discrete, this happens for the same
    reason as in~$X'_C$. Conversely, let~$U$ be a set of $k$~vertices
    so that individualizing them makes~$X''_C$ refinable. Let
    \[
    I =  \set*{i\in[n]}{U\text{ contains a vertex of }P_i \text{ or }X_C^{(i)}\text{, or an inner vertex}\\
      \text{of }\IMP(Q^{(i)},P_i) \text{ or of }\IMP(P_i,P_i^{(j)})\text{ for some }j}.
    \]
    The only way individualizing~$U$ and subsequent color refinement
    can affect a copy~$X_C^{(j)}$ of~$X_C$ with
    $j\in\{1,\dotsc,n\}\setminus I$ is via the pairs~$P_i$, $i\in I$.
    Indeed, the gadget $\IMP(Q^{(j)},P_j)$ cannot cause~$Q^{(j)}$ to
    be distinguished, and if for some $j'\in\{1,\dotsc,n\}\setminus I$
    the pair~$P_{j'}$ becomes distinguished, then whatever color
    refinement did in~$X_C^{(j')}$ will also apply to~$X_C^{(j)}$
    before~$P_{j'}$ becomes distinguished. In particular, after
    individualizing $U'=\{v_i\mid i\in I\}$ instead of~$U$, color
    refinement must distinguish the pair~$Q^{(j)}$; otherwise this
    pair would be a color class of the stable coloring of~$X''_C$
    after individualizing~$U$, contradicting its refinability. Thus
    setting the inputs given by~$I$ to~$1$ must satisfy~$C$. As
    $\card{I}\le\card{U}=k$ and $C$~is monotone, this implies that
    $C$~has a satisfying assignment of weight~$k$.
\end{proof}
As a corollary to this proof we can derive the
\WP-hardness of the $\kcoldeg$ problem.
 
\begin{corollary}\label{cor:coldeg}
  $\kcoldeg$ is $\WP$-hard.
\end{corollary}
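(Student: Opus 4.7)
\noindent\emph{Proof plan.}
The plan is to reuse, essentially verbatim, the reduction $\pair{C,k}\mapsto X''_C$ of Theorem~\ref{th:wp-hard} and to output the \kcoldeg instance $\pair{X''_C,k,0}$: ask whether $k$~vertices can be individualized so that the resulting stable coloring has color valence~$0$. I claim this is a parameter-preserving reduction from~\WMCS, and hence inherits \WP-hardness from \WMCS.

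The positive direction is immediate: if $C$~has a satisfying assignment of weight~$k$, Theorem~\ref{th:wp-hard} yields a $k$-vertex individualization making~$X''_C$ discrete, and in a discrete graph every color class is a singleton, so the color valence is~$0$. For the converse, if $C$~is a no-instance then Theorem~\ref{th:wp-hard} guarantees that no $k$-vertex individualization makes~$X''_C$ refinable; in particular the stable coloring always carries some non-singleton color class~$D$. It therefore suffices to show that in~$X''_C$ any such~$D$ forces color valence at least~$1$.

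For this structural step I~would argue that~$D$ must sit inside either some pair $P_\bullet$ or inside an inner class $F$, $F'$, or $F''$ of some \CFI- or \IMP-gadget, and that the adjacency pattern of these gadgets---each vertex of an attached pair has exactly two neighbors among the four vertices of its $F$-class, each vertex of $F$ has exactly one neighbor in each of its three attached pairs, and in an \IMP-gadget the matchings between $P_i$ and $F',F''$ are single edges per pair-vertex---always exhibits a vertex~$v$ with $\deg_D(v)=1$ and $\text{co-}\!\deg_D(v)\ge 1$. The main obstacle is a short case analysis over the possible coarsest equitable partitions that color refinement can produce inside a single gadget (whether the four inner vertices of a \CFI-gadget survive as one class, as one of the three possible two-class splits, or as singletons, and analogously for the~$F'$ and~$F''$ of an \IMP-gadget), checking each time that some surviving non-individualized pair-vertex witnesses~$\deg_D(v)=1$ against the surviving subclass; the \IMP-case then reduces to the \CFI-case applied to its embedded sub-gadget $\CFI(F',F'',P_k)$, together with the easy observation that a non-singleton~$F'$ or~$F''$ admits a witness among the not-yet-individualized vertices of~$P_i$.
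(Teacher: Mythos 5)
Your proposal is the same reduction as the paper's: it reuses $X''_C$ from Theorem~\ref{th:wp-hard} with $d=0$ and observes that discreteness is equivalent to color valence~$0$ for this graph. The only difference is one of rigor: the paper dismisses the key structural step (that any surviving non-singleton class in $X''_C$ forces valence at least~$1$) with a one-line ``Note that'', whereas you correctly identify that this direction is not true for arbitrary graphs and sketch the gadget-by-gadget case analysis needed to justify it — a worthwhile addition rather than a deviation.
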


\begin{proof}
  In the previous reduction we mapped instances of \WMCS to instances
  of $\kdiscrete$ such that the given boolean circuit~$C$ has a
  satisfying assignment of weight~$k$ if and only if the resulting
  graph~$X''_C$ can be made discrete by individualizing~$k$ vertices.
  Note that individualizing $k$~vertices in~$X''_C$ and subsequently
  running color refinement results in singleton color classes if and
  only if it brings the color valence down to~$0$. Thus, $\kcoldeg$ is
  $\WP$-hard even for $d=0$.
\end{proof}

\subsection{Graphs of color class size at most 3}

We call a vertex-colored graph \emph{$b$-bounded} if all its color
classes are of size at most~$b$. In this section, we show that for any
$3$-bounded graph, we can compute in polynomial time the minimum
number of vertices that have to be individualized so that the
resulting colored graph becomes rigid, discrete, amenable, compact, or
refinable.
  We end this section by providing sufficient conditions for a 3-bounded
  graph to be compact. We first recall the definition of compactness.
  Let~$A$ be the adjacency matrix of a graph $X$. A doubly stochastic
  matrix~$Y$ is said to be a \emph{fractional automorphism} of~$X$ if it
  satisfies the system of linear equations $AY=Y\!\!A$. A graph~$X$ is called
  \emph{compact} if every fractional automorphism of~$X$ can be
  expressed as a convex combination of some permutation matrices
  corresponding to automorphisms of~$X$.  For a graph with color classes
  $C_1,\dots,C_m$, a fractional automorphism is a block diagonal matrix
  with submatrices $Y_1,\dots,Y_m$.  Here, the matrix~$Y_i$ is a
  $\card{C_i}\times\card{C_i}$ matrix.

  \begin{lemma}\label{lem:compact1}
    Let~$X$ be a $3$-bounded graph whose color classes are stable. If
    $\Aut(X)$~restricted to any color class~$C_i$ of~$X$ is the full
    symmetric group on~$C_i$, then $X$~is compact.
  \end{lemma}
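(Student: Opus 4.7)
The plan is to take a fractional automorphism $Y$ of $X$ and exhibit it as a convex combination of permutation matrices of elements of $\Aut(X)$. Since the color classes are stable, $Y$ is block-diagonal with blocks $Y_1,\dotsc,Y_m$ indexed by the color classes $C_1,\dotsc,C_m$, each $Y_i$ being a doubly stochastic matrix of size $\card{C_i}\le 3$. The equation $AY=YA$ is equivalent to the family of conditions $A_{ij}Y_j=Y_iA_{ij}$, where $A_{ij}$ denotes the block of $A$ recording the bipartite adjacencies between $C_i$ and $C_j$. I plan to decompose each $Y_i$ via Birkhoff--von Neumann and then use the structural constraints from the hypothesis to combine these decompositions consistently into a single distribution on $\Aut(X)$.

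The first step is a case analysis of the bipartite structure $A_{ij}$. Stability of the coloring makes each bipartite graph between two color classes biregular, and since $\card{C_i},\card{C_j}\le 3$, a short double-counting argument restricts $A_{ij}$ to four types: edgeless, complete bipartite, a perfect matching, or the complement of a perfect matching (the last two requiring $\card{C_i}=\card{C_j}$). In the edgeless and complete cases, the commutation $A_{ij}Y_j=Y_iA_{ij}$ imposes no restriction on doubly stochastic $Y_i,Y_j$. In the matching or anti-matching case with matching permutation $\sigma_{ij}$, it forces $Y_i=\sigma_{ij}Y_j\sigma_{ij}^{-1}$, and every $\pi\in\Aut(X)$ similarly satisfies $\pi|_{C_j}=\sigma_{ij}\pi|_{C_i}\sigma_{ij}^{-1}$. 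I would then define \emph{clumps} as the equivalence classes of the transitive closure of ``joined by a matching or an anti-matching'', so that within a clump every automorphism is determined by its action on any single member, while across clumps only trivial bipartite links arise.

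The hypothesis $\Aut(X)|_{C_i}=\Sym(C_i)$ forces cycle-consistency of the $\sigma$-compositions inside each clump: for $\card{C_i}=3$ a nontrivial composed $\sigma$ would have to centralize all of $\Sym(C_i)$, contradicting the triviality of the center of $S_3$; for $\card{C_i}\le 2$ cycle-consistency holds automatically since $\Sym(C_i)$ is abelian. Consequently $\Aut(X)$ factors as a direct product $\prod_K H_K$ over the clumps, where each $H_K$ is isomorphic to $\Sym(C)$ for any $C\in K$ and acts diagonally on the members of $K$ through the matchings. In parallel, the commutation constraints collapse all blocks $Y_i$ with $i\in K$ to a single representative doubly stochastic matrix $Y_K$. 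I then apply Birkhoff--von Neumann to each $Y_K$ to obtain a probability distribution $\lambda_K$ on $\Sym(C)$ with $Y_K=\sum_\pi \lambda_K(\pi)P_\pi$, and set $\mu=\prod_K\lambda_K$ on $\Aut(X)\cong\prod_K H_K$; its expected permutation matrix is block-diagonal with block on $C_i$ equal to the matching-conjugate of $Y_K$, which is exactly $Y_i$. This exhibits $Y$ as the required convex combination, with the cycle-consistency verification for size-$3$ clumps being the main subtlety that my plan resolves by exploiting the trivial center of $S_3$ together with the full-symmetric-group hypothesis.
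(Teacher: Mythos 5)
Your proposal is correct and follows essentially the same route as the paper's proof: block-diagonal fractional automorphisms, the key observation that a matching between stable color classes forces the corresponding blocks to be conjugate (the paper's Claim on matched entries), collapse of all blocks in a linked group of classes to one doubly stochastic matrix, and a diagonally lifted Birkhoff--von Neumann decomposition. The only difference is how consistency of the matching identifications is certified --- you argue via the trivial center of $S_3$ (and commutativity for classes of size at most $2$), while the paper deduces that the linked graph has exactly $b$ connected components; your version handles the size-$2$ case, where the identifications genuinely need not be cycle-consistent, more carefully.
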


  \begin{proof}
    As argued in the proof of Theorem~\ref{cc3}, between any two color
    classes we either have a perfect matching or no edges at
    all. Further, we can assume that the color classes of~$X$ are all
    linked to each other. (Otherwise we can partition the vertex set
    $V=V_1\sqcup\cdots\sqcup V_l$ such that each set~$V_i$ is a union of
    linked color classes and there are no edges between $V_i$~and~$V_j$
    whenever $i\ne j$, implying that $X$~is compact if each of the
    induced subgraphs~$X[V_i]$ is compact.)

    Since $\Aut(X)$~restricted to any color class~$C_i$ of~$X$ is the
    full group on~$C_i$ and the color classes of~$X$ are all linked to
    each other, it follows that $X$~has exactly $b$~components.  We can
    number these components, and hence the vertices inside any color
    class, from~$1$ to~$b$.
    
    \begin{claim}\label{comp-mat}
      Let~$Y$ be a fractional automorphism of~$X$.  If a matching
      between color classes $C_i$~and~$C_j$ connects vertices
      $x,y \in C_i$ with $x',y'\in C_j$ respectively, then
      $Y_{x,y}=Y_{x',y'}$.
    \end{claim}
    Expanding the system of linear equations $AY=Y\!\!A$, we obtain the
    subsystem $A_{ij} Y_{j} = Y_{i} A_{ij}$ where $A_{ij}$~is the
    adjacency matrix of~$X[C_i,C_j]$ and $Y_i,Y_j$ are the fractional
    automorphisms induced on $C_i$~and~$C_j$, respectively. Further
    expanding this subsystem proves the claim.

    We now finish the proof of the lemma. Let~$Y_i$ be the $b\times b$ submatrix of the fractional
    automorphism~$Y$ restricted to color class~$C_i$. By the above claim,
    the $(i,j)^{th}$ entry of the submatrices $Y_1,\dots,Y_m$ must be equal. Therefore, 
    $Y_1 = Y_2 = \dots = Y_m = Y^*$ for some doubly stochastic $b \times b$ matrix~$Y^*$. 
    By Birkhoff's theorem (see, e.g.~\cite{Bru}), we can write~$Y^*$ as a convex combination of $b!$~permutation
    matrices $P_1,\dots,P_{b!}$. Since $Y$~is a block diagonal matrix with $m$~blocks of~$Y^*$,
    we can similarly rewrite~$Y$ as a convex combination of $b!$~permutation matrices 
    $\hat{P_1},\dots,\hat{P_b}$. Here, $\hat{P_i}$~is block diagonal with $m$~blocks of~$P_i$. 
    Since $X$~has exactly~$b$ connected components, 
    it is easy to see that $\hat{P_1},\dots,\hat{P_b}$ are automorphisms of~$X$. 
    Hence, the graph~$X$ is compact.
  \end{proof}

  \begin{lemma}\label{lem:compact}
    Let~$X$ be a connected $3$-bounded graph whose color classes are
    stable. If some $\sigma\in\Aut(X)$ is cyclic (i.e., $\sigma$~acts
    cyclically on each color class~$C_i$), then $X$~is compact.
  \end{lemma}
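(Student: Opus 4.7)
The plan is to adapt the argument of Lemma~\ref{lem:compact1} to the weaker hypothesis that only a single cyclic automorphism~$\sigma$ is given. Let~$Y$ be a fractional automorphism of~$X$, in block-diagonal form with blocks $Y_1,\dots,Y_m$ on the color classes $C_1,\dots,C_m$; the goal is to write~$Y$ as a convex combination of automorphism permutation matrices of~$X$. Using~$\sigma$ I identify each~$C_i$ with $\mathbb{Z}/k_i$, where $k_i=\card{C_i}\in\{1,2,3\}$, so that $\sigma$~acts as the shift~$+1$. Because $\sigma\in\Aut(X)$, every bipartite adjacency matrix~$A_{ij}$ is $\sigma$-invariant; a short case analysis then shows that when $k_i=k_j$ the matrix~$A_{ij}$ is circulant (a sum of shift matrices~$\sigma^s$ for $s$~in some subset $S_{ij}\subseteq\mathbb{Z}/k_i$), while when $k_i\neq k_j$ the transitive action of~$\langle\sigma\rangle$ on $C_i\times C_j$ forces $A_{ij}\in\{0,J\}$. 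This leaves only a small list of possible bipartite patterns.

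Next I expand the subsystem $A_{ij}Y_j=Y_iA_{ij}$, just as in the proof of Lemma~\ref{lem:compact1}. When $A_{ij}=\sigma^s$ is a shifted perfect matching this gives $Y_j=\sigma^{-s}Y_i\sigma^s$; for the other admissible shapes of~$A_{ij}$ a similar computation expresses~$Y_j$ as a specific linear image of~$Y_i$ (and in the invertible cases simply $Y_j=A_{ij}^{-1}Y_iA_{ij}$). Fixing a reference color class~$C_0$ and using the connectedness of~$X$, every~$Y_j$ is thereby determined by~$Y_0$ along some path of color classes.

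The crux is consistency of different paths from~$C_0$ to itself: composing the shift relations around a closed walk imposes constraints of the form $Y_0=R^{-1}Y_0R$, where~$R$ is a circulant matrix built from the walk. I will argue that the set of $k_0\times k_0$ doubly stochastic matrices satisfying all such constraints is exactly the convex hull of $\set{P_\pi}{\pi\in\Aut(X)|_{C_0}}$. The inclusion $\supseteq$~is automatic since every automorphism of~$X$ commutes with all~$A_{ij}$; the reverse inclusion $\subseteq$~is the main obstacle, as it requires verifying that the cycles of bipartite shifts in~$X$ generate precisely the right commutation constraints---neither too few (which would leave extraneous extreme points in the polytope) nor too many. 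For $3$-bounded graphs this reduces to a finite case analysis over the few possible $\sigma$-invariant bipartite patterns between color classes of sizes $1$, $2$, and~$3$ enumerated in the first step.

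Finally, applying Birkhoff's theorem to~$Y_0$ writes it as $Y_0=\sum_\pi\alpha_\pi P_\pi$ with $\pi\in\Aut(X)|_{C_0}$. Each such~$\pi$ lifts canonically, via the shift-structure propagation, to an automorphism~$\hat\pi$ of~$X$, and by linearity of the propagation maps the same coefficients decompose every other block as $Y_i=\sum_\pi\alpha_\pi P_{\hat\pi|_{C_i}}$. Assembling the blocks yields $Y=\sum_\pi\alpha_\pi P_{\hat\pi}$, the desired convex combination of automorphisms exhibiting compactness of~$X$.
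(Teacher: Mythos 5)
Your setup (circulant $A_{ij}$'s, the relation $Y_j=A_{ij}^{-1}Y_iA_{ij}$ along matching edges, propagation from a reference class $C_0$ via connectedness) is sound and parallels Claim~\ref{comp-mat}, but the proof has a genuine gap exactly where the work is: you assert that the polytope of doubly stochastic $Y_0$ satisfying all closed-walk constraints $Y_0=R^{-1}Y_0R$ equals the convex hull of $\Aut(X)|_{C_0}$, flag the inclusion $\subseteq$ as ``the main obstacle,'' and defer it to a ``finite case analysis'' that you never carry out. That inclusion \emph{is} the lemma; without it nothing is proved. The subsequent lifting step is also unjustified: a permutation $\pi$ of $C_0$ propagated along paths is only well defined if $\pi$ commutes with every walk-product $R$, and even then you must check that the resulting $\hat\pi$ preserves all edges of $X$, not just those on the chosen spanning paths. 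Note also that your framework barely uses the hypothesis that a cyclic $\sigma\in\Aut(X)$ exists, yet that hypothesis is what makes the conclusion true.

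The intended argument is shorter and avoids the polytope characterization entirely. Since $\sigma$ is a global automorphism, transferring entries along a matching path from $u$ to $v$ and along the parallel image path from $\sigma(u)$ to $\sigma(v)$ (Claim~\ref{comp-mat} applied repeatedly) shows that $Y_{u,\tau(u)}$ is the \emph{same constant} for all vertices $u$ of the connected graph, for each $\tau\in\{\mathrm{id},\sigma,\sigma^2\}$. Because $\sigma$ acts cyclically on each $C_i$, the three sets $\{(u,\tau(u)):u\in C_i\}$ tile all of $C_i\times C_i$, so every entry of every block $Y_i$ is one of three constants $\alpha,\beta,\gamma$ with $\alpha+\beta+\gamma=1$, giving directly $Y=\alpha I+\beta P_\sigma+\gamma P_{\sigma^2}$ --- a convex combination of automorphisms, with no Birkhoff decomposition, no consistency analysis of closed walks, and no case analysis needed. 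If you want to salvage your route, you would have to prove your polytope identity, which for connected $X$ amounts to showing that some closed walk yields a nontrivial power of the shift (otherwise $X$ would split into three components) and then that the constrained $Y_0$ is circulant; at that point you have essentially reconstructed the direct argument above.
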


  \begin{proof}
    We first prove two claims.
    \begin{claim}\label{comp-link}
      Let~$\sigma$ be an automorphism of~$X$. If there is a path
      between two vertices $u$~and~$v$, then for any fractional
      automorphism~$Y$ of~$X$ it holds that
      $Y_{u,\sigma(u)} = Y_{v,\sigma(v)}$.
    \end{claim}
    If vertices $u$~and~$v$ are connected by a path
    $u$-$u_1$-$\dots$-$u_l$-$v$ of matching edges, the vertices
    $\sigma(u)$~and~$\sigma(v)$ are also connected by a parallel path
    $\sigma(u)$-$\sigma(u_1)$-$\dots$-$\sigma(u_l)$-$\sigma(v)$ of
    matching edges.  Applying Claim~\ref{comp-mat} repeatedly along the
    above two matching paths proves the claim.

    \begin{claim}\label{comp-dec}
      Let the color class~$C_i$ be the set of vertices $\{u_i,v_i,w_i\}$.
      Suppose the cyclic automorphism~$\sigma$ sends $u_i,v_i,w_i$ to
      $v_i,w_i,u_i$ respectively.  If $Y$~is a fractional automorphism
      of~$X$, there exist $\alpha,\beta,\gamma \in [0,1]$ such that
      $\alpha+\beta+\gamma=1$ and
      \begin{align*}
        Y_{u_i,u_i} = Y_{v_i,v_i} = Y_{w_i,w_i} = \alpha \quad \mbox{for all $i\in [n]$} \\
        Y_{u_i,v_i} = Y_{v_i,w_i} = Y_{w_i,v_i} = \beta \quad \mbox{for all $i\in [n]$} \\
        Y_{u_i,w_i} = Y_{v_i,u_i} = Y_{w_i,v_i} = \gamma \quad \mbox{for all $i\in [n]$}
      \end{align*}
    \end{claim}
    To prove the claim it suffices to observe that between every two
    vertices there is a path in~$X$. Hence, we can apply Claim~\ref{comp-link}
    for the three cyclic automorphisms
    $\{id,\sigma,\sigma^2\}$ to obtain the three equations respectively.

    Now we are ready to show that $X$~is compact.  Using Claim~\ref{comp-dec},
    a fractional automorphism~$Y$ of~$X$ can be decomposed
    as a convex combination $\alpha I_1 + \beta I_2 + \gamma I_3$ where
    $I_1,I_2,I_3$ are the permutation matrices corresponding to the three
    cyclic automorphisms.
  \end{proof}

\begin{theorem}\label{cc3}
  For any 3-bounded graph we can compute in polynomial time a vertex
  set~$S$ of minimum size such that individualizing (or fixing) all
  the vertices in~$S$ makes the graph discrete, amenable, compact,
  refinable (or rigid).
\end{theorem}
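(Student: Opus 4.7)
The plan is to first stabilize~$X$ by running color refinement; since individualizing a set~$S$ commutes with subsequent refinement, this preprocessing changes neither the automorphism group nor the minimum~$|S|$ for any of the five target properties, so we may assume that all color classes of~$X$ are stable and of size at most~$3$. Next, I would partition the color classes into \emph{linked components}, i.e.\ the connected components of the relation ``the induced bipartite subgraph between~$C_i$ and~$C_j$ is nonempty''. As observed in the proof of Lemma~\ref{lem:compact1}, distinct linked components do not interact either under color refinement or under automorphisms, so each of the five properties decomposes as a conjunction over linked components and the optimum~$|S|$ decomposes as a sum; we therefore treat each component in isolation.

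Inside a linked component, stability together with the color class size bound~$3$ severely restricts every pairwise bipartite piece: it must be regular on at most~$3+3$ vertices, which leaves only a handful of possibilities (empty, perfect matching, complement of a perfect matching on~$3+3$, or complete bipartite). This forces the restriction of~$\Aut(X)$ to a single color class to be a subgroup of~$S_2$ or~$S_3$, and the action on one class essentially determines the action on every other class in the same linked component. Consequently, the restriction of~$\Aut(X)$ to a linked component embeds into a small product of copies of~$S_2$ and~$S_3$ that we can read off directly.

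Using this description I would compute the local optimum for each target property explicitly. For \textsc{Discrete}, pick a minimum number of vertices whose individualization, after rerunning refinement, splits every non-singleton class by propagating through the inter-class matchings. For \textsc{Rigid}, pick a minimum set whose fixing kills the (explicitly computed) restricted automorphism group, which amounts to a base computation inside a small product of~$S_2$'s and~$S_3$'s. For \textsc{Compact}, \textsc{Amenable}, and \textsc{Refinable}, combine Lemmas~\ref{lem:compact1} and~\ref{lem:compact}: a $3$-bounded stable linked component already lies in all three classes whenever the restricted automorphism group acts on each color class either as the full symmetric group or cyclically, so the remaining task is a short case analysis identifying which single-vertex individualizations reduce the bad cases to this good one. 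The key structural point needed is that on $3$-bounded stable linked components the three predicates amenable, compact, and refinable in fact coincide, so one criterion simultaneously handles all three goals.

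The main obstacle is precisely this case analysis for amenable/compact/refinable on a single linked component: verifying that the three predicates coincide there and giving an explicit recipe for the minimum number of individualizations. Given Lemmas~\ref{lem:compact1} and~\ref{lem:compact} and the bounded number of local structural types, this is a finite case check. Once each per-component optimum has been computed in polynomial time, summing them yields the global optimum and the claimed polynomial-time algorithm; with some extra bookkeeping the same procedure fits in logspace.
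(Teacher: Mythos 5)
Your overall architecture matches the paper's: stabilize first, use stability plus the size bound to reduce every inter-class bipartite piece to a perfect matching or nothing (after complementation), decompose into linked components, and then do a finite case analysis on the restricted automorphism group driven by Lemmas~\ref{lem:compact1} and~\ref{lem:compact}. The discrete and rigid parts of your plan are fine.

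However, your declared ``key structural point'' --- that amenable, compact, and refinable coincide on $3$-bounded stable linked components --- is false, and it is exactly the load-bearing step for three of the five target properties. Counterexample: take three size-$2$ color classes pairwise joined by perfect matchings so that the union of the matchings is a $6$-cycle. This linked component satisfies the hypothesis of Lemma~\ref{lem:compact1} (the restriction of $\Aut$ to each class is all of $S_2$), so it is compact and hence refinable with zero individualizations; but it is not amenable, since a colored graph of this kind is amenable if and only if it is a forest (this is the characterization from~\cite{AKRV2} that the paper invokes), and a $6$-cycle is not a forest. So the optimum for \textsc{Amenable} is $1$ while the optimum for \textsc{Compact} is $0$, and your unified criterion returns the wrong number. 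The discrepancy is even larger in the size-$3$ case: a component with six automorphisms (three disjoint isomorphic connected pieces) containing cycles is already compact but needs \emph{two} individualized vertices to become amenable. What the paper actually does is treat amenability by a separate test (``is this linked component a forest?'') with its own individualization counts, rather than folding it into the compact/refinable analysis; you need to do the same, which means importing the forest characterization of amenability for bounded-color-class graphs as an additional ingredient your proposal currently lacks. (Compact and refinable do happen to coincide across the paper's cases, so that part of your coincidence claim survives; it is only amenability that breaks.)
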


\begin{proof}
  Let $X=(V,E,c)$ be the given 3-bounded graph. We first compute the
  color partition $\{C_1,\ldots,C_m\}$ of the stable coloring of~$X$.
  We can assume that each induced graph $X_i=X[C_i]$ is empty and
  each induced bipartite graph $X_{ij}=X[C_i,C_j]$ has at most
  $\card{C_i}\cdot\card{C_j}/2$ edges, as otherwise we can complement these
  subgraphs. Since the partition $\{C_1,\ldots,C_m\}$ is stable
  and the color classes have size at most~$3$, it
  follows that there are no edges between color classes having
  different sizes, and that between color classes $C_i$~and~$C_j$ of
  the same size we either have a perfect matching or no edges at
  all. 

  We say that two color classes $C_i$~and~$C_j$ are \emph{linked} if
  there is a path between some vertex $u\in C_i$ and some vertex
  $v\in C_j$. Since this is an equivalence relation, it partitions the
  color classes into equivalence classes. This induces a partition
  $V=V_1\sqcup\cdots\sqcup V_l$ of the vertices such that each set~$V_i$
  is a union of linked color classes having the same size and
  there are no edges between $V_i$~and~$V_j$ whenever $i\ne j$.
  Hence, it suffices to solve the problem separately for each of the
  induced subgraphs~$X[V_i]$.

  If all color classes of~$X[V_i]$ are of size~2, then $\Aut(X[V_i])$~contains
  exactly one non-trivial automorphism flipping all the color
  classes, implying that $X[V_i]$~is compact (see
  Lemma~\ref{lem:compact1}). In this case it suffices to individualize
  (or fix) an arbitrary vertex to make the graph discrete (or
  rigid). Further, $X[V_i]$~is already amenable if and only if it is a
  forest~\cite{AKRV2}.

  If all color classes of~$X[V_i]$ are of size~3, then we compute its
  connected components as well as~$\Aut(X[V_i])$ (which is even
  possible in logspace~\cite{JKMT03,Rei05}) and consider the following
  subcases.
  \begin{itemize}
  \item If $X[V_i]$ has 6~automorphisms (or, equivalently, consists of
    three connected components), then $X[V_i]$~is compact (see
    Lemma~\ref{lem:compact1}) and it suffices to individualize two
    vertices inside an arbitrary color class to make the graph
    discrete. On the other hand, if we individualize only one vertex,
    then the graph does not become discrete (not even rigid). Further,
    $X[V_i]$~is amenable if and only if it is a forest~\cite{AKRV2}. If
    $X[V_i]$~contains cycles then we need to individualize 2~vertices to make
    the graph amenable.
  \item If $X[V_i]$ has 3~automorphisms, then it follows that these
    automorphisms act cyclically on each color class and $X[V_i]$~is
    connected as well as compact (see Lemma~\ref{lem:compact}). In
    this case it suffices to individualize an arbitrary vertex to make
    the graph discrete.
  \item If $X[V_i]$ has 2~automorphisms (or, equivalently, consists of
    two connected components), then $X[V_i]$~is not refinable and it
    suffices to individualize an arbitrary vertex in the larger of the
    two components to make the graph discrete.
  \item Finally, if $X[V_i]$~is rigid, then it follows that $X[V_i]$~is
    connected and not refinable. In this case it suffices to
    individualize an arbitrary vertex to make the graph discrete.
    \qedhere
  \end{itemize}
\end{proof}
  We next show that for any 3-bounded graph the stable color partition
  is computable in logspace. Combined with the case analysis in the
  proof of Theorem~\ref{cc3} it follows that for any 3-bounded graph,
  the minimum number of vertices that have to be individualized (or
  fixed) so that the resulting colored graph becomes discrete,
  amenable, compact, refinable (or rigid) is even computable in
  logspace.

  \begin{lemma}\label{lem:stable}
    The stable color partition of any 3-bounded graph is computable in
    logspace.
  \end{lemma}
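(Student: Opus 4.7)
Plan:

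The plan is to compute the stable color partition by directly simulating color refinement on the input $3$-bounded graph $X$, exploiting the severely restricted structure of splits that can occur in such graphs.

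I would first observe that since every color class has size at most $3$, each class can split at most twice during the entire refinement process; hence the total number of splits is bounded by $2n$, and color refinement terminates in at most $2n$ rounds. This also means that the stable partition is specifiable compactly: for each of the (at most $n$) initial color classes~$C$, one records which of the (at most $5$) possible partitions of~$C$ arises in the stable partition. The output of the algorithm is exactly this succinct specification, from which the color of any individual vertex can be reconstructed by a simple logspace routine.

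To compute this specification in logspace, I would reduce the problem to undirected reachability on a polynomial-size auxiliary graph~$H$. The vertices of~$H$ encode candidate sub-partitions of initial classes, and the edges encode the single-round implication ``if class~$C_1$ is partitioned as~$\pi_1$, then class~$C_2$ must be at least as fine as~$\pi_2$'', which can be read off directly from the input adjacencies between $C_1$~and~$C_2$. Since two vertices $v,v'\in C$ are separated precisely when some neighborhood class~$C'$ distinguishes them, and this criterion is symmetric in $v$~and~$v'$, the induced dependency relation can be realized as an undirected graph. Reading off the stable partition from the connected components of~$H$ is then a straightforward logspace task: the sub-partition of each initial class is determined by the unique reachable ``consistent'' state in the component it belongs to.

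The main obstacle is to verify that this reduction to undirected reachability is correct, i.e., that the connected components of~$H$ accurately capture the fixed point of color refinement, and that $H$~has polynomial size. The polynomial size follows from the fact that each initial class has only constantly many candidate sub-partitions, so $H$~has $O(n)$ vertices and $O(n^2)$ edges. Correctness follows from the observation that in a $3$-bounded graph splits propagate by a purely local rule depending only on the pairwise adjacency pattern between constant-sized classes, so no iterative ``memory'' beyond single-round implications is needed to reach the fixed point. Given the reduction, Reingold's theorem~\cite{Rei05}, which places undirected reachability in~$\cclass{L}$, yields the desired logspace algorithm.
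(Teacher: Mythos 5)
Your high-level strategy coincides with the paper's: both reduce the computation of the stable partition to undirected reachability in a polynomial-size auxiliary structure and then invoke Reingold's theorem~\cite{Rei05}. The difference is that the paper builds the reachability instance directly on the vertices of~$X$ (a set~$W$ of vertices individualized after one round, plus a set~$F$ of ``propagating'' matching edges), whereas you build it on (class, candidate sub-partition) states. That variation would be fine if the reduction were justified.

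The genuine gap is that the two claims on which your correctness rests are exactly the mathematical content of the lemma, and you assert rather than prove them. First, the single-round dependency ``if $C_1$ is split as $\pi_1$ then $C_2$ is at least as fine as $\pi_2$'' is naturally a \emph{directed} implication; computing the least fixed point of such a monotone operator from the round-one splits is a directed-reachability (indeed \cclass{NL}-flavored) computation, not known to be in logspace. Your justification for undirectedness --- that the separation criterion is ``symmetric in $v$ and $v'$'' --- addresses symmetry in the two vertices being separated, which is irrelevant; what is needed is that whenever a split of $C_1$ forces a split of $C_2$, the reverse split of $C_2$ forces the corresponding split of $C_1$. This does hold for $3$-bounded graphs, but only after a careful enumeration of the possible bipartite patterns between two classes (after normalizing degrees and complementing dense bipartite graphs); the paper's proof consists almost entirely of this case analysis (its Figure~3) together with an induction on refinement rounds showing that no multi-round ``memory'' is needed. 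Second, you also need the ``last remaining vertex'' effect: if all but one vertex of a class become singletons, the last one does too, even though no implication edge points at it; the paper handles this explicitly in its characterization, and your ``unique reachable consistent state'' phrasing does not make clear how your components encode it. As written, the proposal reduces the lemma to two unproven claims that are precisely where the work lies.
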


  \begin{proof}
    Let $X=(V,E,c)$ be a $3$-bounded graph and let $C_1,\ldots,C_m$ be its color
    classes. We use~$X_i$ to denote the graph~$X[C_i]$ induced
    by~$C_i$ and $X_{ij}$~to denote the bipartite graph~$X[C_i,C_j]$
    induced by the pair of color classes $C_i$~and~$C_j$.  We can assume
    that the vertices in each graph~$X_i$ have the same
    degree. Otherwise we can split~$C_i$ into smaller color
    classes. Moreover, we can assume that each graph~$X_i$ is the empty
    graph on vertex set~$C_i$ and that each bipartite graph~$X_{ij}$ has
    at most $\card{C_i}\cdot\card{C_j}/2$ edges, since otherwise, we can
    replace~$X_{ij}$ by the complement bipartite graph.

    The idea is to pick a set~$W$ of vertices and a set $F\subseteq E$
    of edges such that color refinement assigns a unique color to all
    vertices that are reachable from some vertex in~$W$ via edges in~$F$.
    A vertex belongs to~$W$ if it receives a unique color after the
    first round (vertices belonging to~$W$ are depicted as a box in
    Fig.~\ref{fig:connections}). The edge set~$F$ is formed by picking
    from each graph~$X_{ij}$ all edges $e=\{v,w\}$ with
    $e\cap W=\emptyset$ such that individualizing one of the two
    endpoints of~$e$ causes color refinement to assign a unique color
    also to the other endpoint (see Fig.~\ref{fig:connections}; these
    edges are depicted in bold).  It is clear that $W$~and~$F$ can be
    easily determined in logspace.

    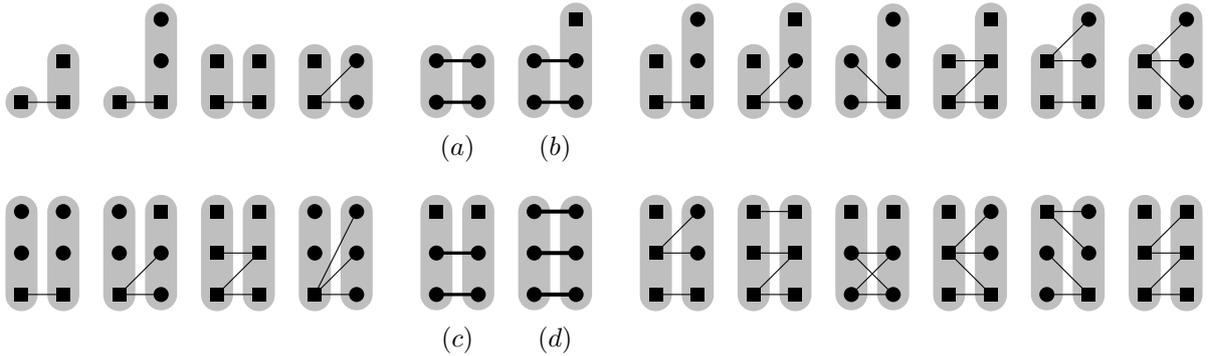
\begin{figure}
      \centering
      \tikzset{
        x=.275cm,y=.55cm,
        gv/.style={circle,fill,inner sep=2pt},
        gu/.style={fill,inner sep=2.5pt},
        every edge/.style={draw,shorten <=-1pt,shorten >=-1pt},
        cc/.style={gray!50!white,line cap=round,line width=12pt,shorten <=-3pt,shorten >=-3pt},
      }
      \begin{tikzpicture}
        \path
        (0,-1.1) node {\phantom{$(d)$}}
        (-1,0) node[gu] (v3) {}
        (1,1) node[gu] (u2) {}
        (1,0) node[gu] (u3) {} edge (v3);
        \begin{scope}[on background layer]
          \fill[cc] (v3) circle (6pt);
          \draw[cc] (u2) -- (u3);
        \end{scope}
      \end{tikzpicture}\hfill
      \begin{tikzpicture}
        \path
        (0,-1.1) node {\phantom{$(d)$}}
        (-1,0) node[gu] (v3) {}
        (1,2) node[gv] (u1) {}
        (1,1) node[gv] (u2) {}
        (1,0) node[gu] (u3) {} edge (v3);
        \begin{scope}[on background layer]
          \fill[cc] (v3) circle (6pt);
          \draw[cc] (u1) -- (u3);
        \end{scope}
      \end{tikzpicture}\hfill
      \begin{tikzpicture}
        \path
        (0,-1.1) node {\phantom{$(d)$}}
        (-1,1) node[gu] (v2) {}
        (-1,0) node[gu] (v3) {}
        (1,1) node[gu] (u2) {}
        (1,0) node[gu] (u3) {} edge (v3);
        \begin{scope}[on background layer]
          \draw[cc] (v2) -- (v3);
          \draw[cc] (u2) -- (u3);
        \end{scope}
      \end{tikzpicture}\hfill
      \begin{tikzpicture}
        \path
        (0,-1.1) node {\phantom{$(d)$}}
        (-1,1) node[gu] (v2) {}
        (-1,0) node[gu] (v3) {}
        (1,1) node[gv] (u2) {} edge (v3)
        (1,0) node[gv] (u3) {} edge (v3);
        \begin{scope}[on background layer]
          \draw[cc] (v2) -- (v3);
          \draw[cc] (u2) -- (u3);
        \end{scope}
      \end{tikzpicture}\hfill\hfill
      \begin{tikzpicture}
        \path
        (0,-1.1) node {\phantom{$(d)$}}
        (0,-1.1) node {$(a)$}
        (-1,1) node[gv] (v2) {}
        (-1,0) node[gv] (v3) {}
        (1,1) node[gv] (u2) {} edge[line width=1.2pt] (v2)
        (1,0) node[gv] (u3) {} edge[line width=1.2pt] (v3);
        \begin{scope}[on background layer]
          \draw[cc] (v2) -- (v3);
          \draw[cc] (u2) -- (u3);
        \end{scope}
      \end{tikzpicture}\hfill
      \begin{tikzpicture}
        \path
        (0,-1.1) node {\phantom{$(d)$}}
        (0,-1.1) node {$(b)$}
        (-1,1) node[gv] (v2) {}
        (-1,0) node[gv] (v3) {}
        (1,2) node[gu] (u1) {}
        (1,1) node[gv] (u2) {} edge[line width=1.2pt] (v2)
        (1,0) node[gv] (u3) {} edge[line width=1.2pt] (v3);
        \begin{scope}[on background layer]
          \draw[cc] (v2) -- (v3);
          \draw[cc] (u1) -- (u3);
        \end{scope}
      \end{tikzpicture}\hfill\hfill
      \begin{tikzpicture}
        \path
        (0,-1.1) node {\phantom{$(d)$}}
        (0,-1.1) node {\phantom{$(d)$}}
        (-1,1) node[gu] (v2) {}
        (-1,0) node[gu] (v3) {}
        (1,2) node[gv] (u1) {}
        (1,1) node[gv] (u2) {}
        (1,0) node[gu] (u3) {} edge (v3);
        \begin{scope}[on background layer]
          \draw[cc] (v2) -- (v3);
          \draw[cc] (u1) -- (u3);
        \end{scope}
      \end{tikzpicture}\hfill
      \begin{tikzpicture}
        \path
        (0,-1.1) node {\phantom{$(d)$}}
        (-1,1) node[gu] (v2) {}
        (-1,0) node[gu] (v3) {}
        (1,2) node[gu] (u1) {}
        (1,1) node[gv] (u2) {} edge (v3)
        (1,0) node[gv] (u3) {} edge (v3);
        \begin{scope}[on background layer]
          \draw[cc] (v2) -- (v3);
          \draw[cc] (u1) -- (u3);
        \end{scope}
      \end{tikzpicture}\hfill
      \begin{tikzpicture}
        \path
        (0,-1.1) node {\phantom{$(d)$}}
        (-1,1) node[gv] (v2) {}
        (-1,0) node[gv] (v3) {}
        (1,2) node[gv] (u1) {}
        (1,1) node[gv] (u2) {}
        (1,0) node[gu] (u3) {} edge (v2) edge (v3);
        \begin{scope}[on background layer]
          \draw[cc] (v2) -- (v3);
          \draw[cc] (u1) -- (u3);
        \end{scope}
      \end{tikzpicture}\hfill
      \begin{tikzpicture}
        \path
        (0,-1.1) node {\phantom{$(d)$}}
        (-1,1) node[gu] (v2) {}
        (-1,0) node[gu] (v3) {}
        (1,2) node[gu] (u1) {}
        (1,1) node[gu] (u2) {} edge (v2) edge (v3)
        (1,0) node[gu] (u3) {} edge (v3);
        \begin{scope}[on background layer]
          \draw[cc] (v2) -- (v3);
          \draw[cc] (u1) -- (u3);
        \end{scope}
      \end{tikzpicture}\hfill
      \begin{tikzpicture}
        \path
        (0,-1.1) node {\phantom{$(d)$}}
        (-1,1) node[gu] (v2) {}
        (-1,0) node[gu] (v3) {}
        (1,2) node[gv] (u1) {} edge (v2)
        (1,1) node[gv] (u2) {} edge (v2)
        (1,0) node[gu] (u3) {} edge (v3);
        \begin{scope}[on background layer]
          \draw[cc] (v2) -- (v3);
          \draw[cc] (u1) -- (u3);
        \end{scope}
      \end{tikzpicture}\hfill
      \begin{tikzpicture}
        \path
        (0,-1.1) node {\phantom{$(d)$}}
        (-1,1) node[gu] (v2) {}
        (-1,0) node[gu] (v3) {}
        (1,2) node[gv] (u1) {} edge (v2)
        (1,1) node[gv] (u2) {} edge (v2)
        (1,0) node[gv] (u3) {} edge (v2);
        \begin{scope}[on background layer]
          \draw[cc] (v2) -- (v3);
          \draw[cc] (u1) -- (u3);
        \end{scope}
      \end{tikzpicture}\\[4mm]%
      \begin{tikzpicture}
        \path
        (0,-1.1) node {\phantom{$(d)$}}
        (-1,2) node[gv] (v1) {}
        (-1,1) node[gv] (v2) {}
        (-1,0) node[gu] (v3) {}
        (1,2) node[gv] (u1) {}
        (1,1) node[gv] (u2) {}
        (1,0) node[gu] (u3) {} edge (v3);
        \begin{scope}[on background layer]
          \draw[cc] (v1) -- (v3);
          \draw[cc] (u1) -- (u3);
        \end{scope}
      \end{tikzpicture}\hfill
      \begin{tikzpicture}
        \path
        (0,-1.1) node {\phantom{$(d)$}}
        (-1,2) node[gv] (v1) {}
        (-1,1) node[gv] (v2) {}
        (-1,0) node[gu] (v3) {}
        (1,2) node[gu] (u1) {}
        (1,1) node[gv] (u2) {} edge (v3)
        (1,0) node[gv] (u3) {} edge (v3);
        \begin{scope}[on background layer]
          \draw[cc] (v1) -- (v3);
          \draw[cc] (u1) -- (u3);
        \end{scope}
      \end{tikzpicture}\hfill
      \begin{tikzpicture}
        \path
        (0,-1.1) node {\phantom{$(d)$}}
        (-1,2) node[gu] (v1) {}
        (-1,1) node[gu] (v2) {}
        (-1,0) node[gu] (v3) {}
        (1,2) node[gu] (u1) {}
        (1,1) node[gu] (u2) {} edge (v2) edge (v3)
        (1,0) node[gu] (u3) {} edge (v3);
        \begin{scope}[on background layer]
          \draw[cc] (v1) -- (v3);
          \draw[cc] (u1) -- (u3);
        \end{scope}
      \end{tikzpicture}\hfill
      \begin{tikzpicture}
        \path
        (0,-1.1) node {\phantom{$(d)$}}
        (-1,2) node[gv] (v1) {}
        (-1,1) node[gv] (v2) {}
        (-1,0) node[gu] (v3) {}
        (1,2) node[gv] (u1) {} edge (v3)
        (1,1) node[gv] (u2) {} edge (v3)
        (1,0) node[gv] (u3) {} edge (v3);
        \begin{scope}[on background layer]
          \draw[cc] (v1) -- (v3);
          \draw[cc] (u1) -- (u3);
        \end{scope}
      \end{tikzpicture}\hfill\hfill
      \begin{tikzpicture}
        \path
        (0,-1.1) node {\phantom{$(d)$}}
        (0,-1.1) node {$(c)$}
        (-1,2) node[gu] (v1) {}
        (-1,1) node[gv] (v2) {}
        (-1,0) node[gv] (v3) {}
        (1,2) node[gu] (u1) {}
        (1,1) node[gv] (u2) {} edge[line width=1.2pt] (v2)
        (1,0) node[gv] (u3) {} edge[line width=1.2pt] (v3);
        \begin{scope}[on background layer]
          \draw[cc] (v1) -- (v3);
          \draw[cc] (u1) -- (u3);
        \end{scope}
      \end{tikzpicture}\hfill
      \begin{tikzpicture}
        \path
        (0,-1.1) node {\phantom{$(d)$}}
        (0,-1.1) node {$(d)$}
        (-1,2) node[gv] (v1) {}
        (-1,1) node[gv] (v2) {}
        (-1,0) node[gv] (v3) {}
        (1,2) node[gv] (u1) {} edge[line width=1.5pt] (v1)
        (1,1) node[gv] (u2) {} edge[line width=1.5pt] (v2)
        (1,0) node[gv] (u3) {} edge[line width=1.5pt] (v3);
        \begin{scope}[on background layer]
          \draw[cc] (v1) -- (v3);
          \draw[cc] (u1) -- (u3);
        \end{scope}
      \end{tikzpicture}\hfill\hfill
      \begin{tikzpicture}
        \path
        (0,-1.1) node {\phantom{$(d)$}}
        (-1,2) node[gu] (v1) {}
        (-1,1) node[gu] (v2) {}
        (-1,0) node[gu] (v3) {}
        (1,2) node[gv] (u1) {} edge (v2)
        (1,1) node[gv] (u2) {} edge (v2)
        (1,0) node[gu] (u3) {} edge (v3);
        \begin{scope}[on background layer]
          \draw[cc] (v1) -- (v3);
          \draw[cc] (u1) -- (u3);
        \end{scope}
      \end{tikzpicture}\hfill
      \begin{tikzpicture}
        \path
        (0,-1.1) node {\phantom{$(d)$}}
        (-1,2) node[gu] (v1) {}
        (-1,1) node[gu] (v2) {}
        (-1,0) node[gu] (v3) {}
        (1,2) node[gu] (u1) {} edge (v1)
        (1,1) node[gu] (u2) {} edge (v2) edge (v3)
        (1,0) node[gu] (u3) {} edge (v3);
        \begin{scope}[on background layer]
          \draw[cc] (v1) -- (v3);
          \draw[cc] (u1) -- (u3);
        \end{scope}
      \end{tikzpicture}\hfill
      \begin{tikzpicture}
        \path
        (0,-1.1) node {\phantom{$(d)$}}
        (-1,2) node[gu] (v1) {}
        (-1,1) node[gv] (v2) {}
        (-1,0) node[gv] (v3) {}
        (1,2) node[gu] (u1) {}
        (1,1) node[gv] (u2) {} edge (v2) edge (v3)
        (1,0) node[gv] (u3) {} edge (v2) edge (v3);
        \begin{scope}[on background layer]
          \draw[cc] (v1) -- (v3);
          \draw[cc] (u1) -- (u3);
        \end{scope}
      \end{tikzpicture}\hfill
      \begin{tikzpicture}
        \path
        (0,-1.1) node {\phantom{$(d)$}}
        (-1,2) node[gu] (v1) {}
        (-1,1) node[gu] (v2) {}
        (-1,0) node[gu] (v3) {}
        (1,2) node[gv] (u1) {} edge (v2)
        (1,1) node[gv] (u2) {} edge (v2)
        (1,0) node[gu] (u3) {} edge (v2) edge (v3);
        \begin{scope}[on background layer]
          \draw[cc] (v1) -- (v3);
          \draw[cc] (u1) -- (u3);
        \end{scope}
      \end{tikzpicture}\hfill
      \begin{tikzpicture}
        \path
        (0,-1.1) node {\phantom{$(d)$}}
        (-1,2) node[gu] (v1) {}
        (-1,1) node[gv] (v2) {}
        (-1,0) node[gv] (v3) {}
        (1,2) node[gv] (u1) {} edge (v1)
        (1,1) node[gv] (u2) {} edge (v1)
        (1,0) node[gu] (u3) {} edge (v2) edge (v3);
        \begin{scope}[on background layer]
          \draw[cc] (v1) -- (v3);
          \draw[cc] (u1) -- (u3);
        \end{scope}
      \end{tikzpicture}\hfill
      \begin{tikzpicture}
        \path
        (0,-1.1) node {\phantom{$(d)$}}
        (-1,2) node[gu] (v1) {}
        (-1,1) node[gu] (v2) {}
        (-1,0) node[gu] (v3) {}
        (1,2) node[gu] (u1) {} edge (v2)
        (1,1) node[gu] (u2) {} edge (v2) edge (v3)
        (1,0) node[gu] (u3) {} edge (v3);
        \begin{scope}[on background layer]
          \draw[cc] (v1) -- (v3);
          \draw[cc] (u1) -- (u3);
        \end{scope}
      \end{tikzpicture}
      \caption{Possible edge connections between color classes; vertices
        that belong to~$W$ because of these edges are depicted as a box; edges
        belonging to~$F$ are bold; the latter only appear in the pairs marked
        $(a)$, $(b)$, $(c)$ and $(d)$}
      \label{fig:connections}
    \end{figure}

    The following claim shows how the stable color partition of~$X$
    can be derived from the sets $W$~and~$F$ by a logspace
    computation.

    \begin{claim}
      On input~$X$, color refinement provides a unique color to a vertex
      $v\in C_i$ if and only if there is an $F$-path connecting~$v$ with
      some vertex in~$W$ or $v$~is the only vertex in its color class
      that is not reachable from~$W$ by an $F$-path.
    \end{claim}
    We prove the claim by induction on the number of rounds~$r$. We
    denote the length of a shortest $F$-path (if it exists) between a
    vertex~$v$ and some vertex in~$W$ by~$d(v,W)$. We show that the
    following equivalence holds for any $r\ge1$.
    \begin{quote}
      After round~$r$, vertex~$v$ has a unique color if and only if
      $d(v,W)<r$ or $v$~is the only vertex in its color class with
      $d(v,W)\ge r$.
    \end{quote}
    For $r=1$ the equivalence holds by definition of~$W$. Hence, it
    suffices to prove the equivalence for $r\ge2$ provided that it holds
    for $r-1$.  Let~$v$ be an arbitrary vertex. We first prove the
    backward implication of the equivalence.
    \begin{itemize}
     \item If $d(v,W)<r$, then there is an edge $\{v,w\}\in F$ with
      $d(w,W)<r-1$. By induction hypothesis, $w$~has a unique color after
      round $r-1$. But then also~$v$ has a unique color after round~$r$,
      since it is the unique neighbor of~$w$ in its color class (see
      Fig.~\ref{fig:connections}).
     \item If $v$~is the only vertex in its color class~$C_i$ with
      $d(v,W)\ge r$, it follows that $d(v',W)<r$ holds for all other
      vertices $v'\in C_i$. Hence, by using the same argument as above,
      it follows that all other vertices $v'\in C_i$ (and therefore all
      vertices in~$C_i$) have a unique color after round~$r$.
    \end{itemize}
    Next we prove the forward implication. We call two color classes
    \emph{linked}, if they are connected by at least one edge in~$F$
    (these pairs are marked as $(a)$, $(b)$, $(c)$ and~$(d)$ in
    Fig.~\ref{fig:connections}). By inspecting all unlinked pairs of
    color classes it is easy to verify that color refinements can only be
    propagated along linked color classes. Since $v$~receives a unique
    color in round~$r$ and since $v$~has to be distinguished from at most
    two other vertices in~$C_i$, either a single linked color class~$C_j$
    or at most two linked color classes $C_j$~and~$C_k$ cause the
    individualization of~$v$ in round~$r$. This means that at least one
    vertex in $C_j\setminus W$ has a unique color after round
    $r-1$. Hence, by induction hypothesis, one or more vertices
    $w_1,\dots,w_l\in C_j\setminus W$ are reachable from~$W$ by an
    $F$-path of length at most $r-2$. In the cases that $l\ge2$ or that
    $v$~is adjacent to some vertex in $\{w_1,\dots,w_l\}$ or that
    $C_i$~and~$C_j$ form a linked pair of type $(a)$, $(b)$ or $(c)$, it
    immediately follows that $d(v,W)\ge r$ holds for at most one vertex
    in~$C_i$.

    It remains to consider the case that the link between $C_i$~and~$C_j$
    is of type~$(d)$ and $v$~is not adjacent to the only vertex~$w_1$
    in~$C_j$ with $d(v,W)\le r-2$. Observe that in this case, the link
    between $C_i$~and~$C_j$ only causes the individualization of the
    neighbor~$v'$ of~$w_1$ in~$C_i$, but not the individualization of~$v$
    in round~$r$. Hence, there is a type~$(d)$ link between~$C_i$ and
    another color class~$C_k$ that causes the individualization of the
    third vertex $v''\in C_i$ in round~$r$. By the same argument as above
    it follows that $v''$~is adjacent to some vertex $w''\in C_k$ with
    $d(w,W)<r-1$. This concludes the proof of the claim and of the lemma
    since it follows that $v$~is the only vertex in~$C_i$ with
    $d(v,W)\ge r$.
  \end{proof}

  \begin{corollary}
    For any 3-bounded graph we can compute in logspace a vertex set~$S$
    of minimum size such that individualizing (or fixing) all the
    vertices in~$S$ makes the graph discrete, amenable, compact,
    refinable (or rigid).
  \end{corollary}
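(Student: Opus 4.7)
The plan is to re-examine the case distinction of Theorem~\ref{cc3} and verify that each step can be carried out in logspace, using Lemma~\ref{lem:stable} as the first ingredient.

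First, I would compute the stable color partition $\{C_1,\ldots,C_m\}$ of the input graph $X=(V,E,c)$ in logspace via Lemma~\ref{lem:stable}. Since each color class has size at most $3$, the normalization step that complements each induced subgraph $X_i$ or each bipartite subgraph $X_{ij}$ with more than $\card{C_i}\cdot\card{C_j}/2$ edges is a purely local check on a constant number of vertices and edges, hence clearly logspace-computable.

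Next, I would compute the partition $V=V_1\sqcup\cdots\sqcup V_l$ into linked components. Two color classes being linked is an equivalence relation generated by the existence of an edge between them, so the corresponding $V_i$ are the connected components of the auxiliary multigraph on $\{C_1,\ldots,C_m\}$ whose edges record edges of $X$; undirected reachability is in logspace by Reingold's theorem~\cite{Rei05}. For each $V_i$, the induced subgraph $X[V_i]$ is again $3$-bounded, and within each $V_i$ I would determine its number of connected components by a second invocation of Reingold's algorithm. In the case where all color classes of $X[V_i]$ have size~$3$, I additionally need $\Aut(X[V_i])$; by~\cite{JKMT03} the automorphism group of a colored graph with color class size at most~$3$ (in fact, at most~$4$) is computable in logspace, so this poses no problem.

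Once these data are available for each $V_i$, the case analysis in the proof of Theorem~\ref{cc3} prescribes the contribution of $V_i$ to $\card{S}$ as a fixed small constant depending only on the type of $V_i$ (size of color classes, number of components, number of automorphisms, being a forest or not); checking whether $X[V_i]$ is a forest is another application of logspace undirected reachability. Summing these constants over all $i$ and outputting the corresponding witness vertices takes logspace as well. The main obstacle is really just the logspace computability of the primitives used by the case analysis (stable coloring, connectivity, automorphism groups of $3$-bounded colored graphs); each of these is available from Lemma~\ref{lem:stable} and the cited results~\cite{JKMT03,Rei05}, so stitching them together via standard logspace composition yields the claim.
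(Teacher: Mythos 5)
Your proposal is correct and matches the paper's (implicit) argument: the paper derives the corollary exactly by combining Lemma~\ref{lem:stable} with the case analysis of Theorem~\ref{cc3}, noting that the remaining primitives (connected components, automorphism groups of $3$-bounded graphs, forest tests) are logspace-computable by~\cite{JKMT03,Rei05}. You have simply spelled out the composition in more detail than the paper does.
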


\subsection{Bounded number of refinement steps}

In this section, we consider (colored) graphs in which all color
classes become singletons after $\ell$~rounds of color refinement. We
denote the class of these graphs by $\ldiscrete$.

\begin{theorem}
  The $\IndLDiscrete$ problem is $\wtwo$-hard for any constant
  $\ell\ge 1$, even for uncolored and for $2$-bounded graphs.
\end{theorem}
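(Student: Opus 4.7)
My plan is to give a parameter-preserving reduction from $k$-\DomSet (which is $\wtwo$-complete), in two stages.

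\textbf{Reduction on 2-bounded graphs.} I first reduce \DomSet to its distance-$\ell$ variant: given $(G_0,k)$, attach a pendant path of length $\ell-1$ to every vertex of $G_0$, obtaining $G$; a direct case check shows that $G$ has a distance-$\ell$ dominating set of size $k$ iff $G_0$ has a dominating set of size $k$, since a pendant endpoint is at distance $\ell-1$ from its vertex and at distance $\geq\ell$ from every other original or pendant vertex, and so optimal solutions may always be taken inside $V(G_0)$. Next, given $G=(V,E)$ with $V=\{v_1,\ldots,v_n\}$, I build a $2$-bounded colored graph $X(G)$ whose color classes are $C_i=\{v_i^0,v_i^1\}$ and whose edges are, for every $\{v_i,v_j\}\in E$, the perfect matching $\{v_i^0v_j^0,\,v_i^1v_j^1\}$ between $C_i$ and $C_j$. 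This initial coloring is stable. An easy induction on the round number shows that, after individualizing some $S\subseteq V(X)$, the color class $C_j$ is split into singletons after $r$ rounds of color refinement iff there is some $C_i$ with $S\cap C_i\neq\emptyset$ satisfying $d_G(v_i,v_j)\le r$. Hence $X(G)$ is made $\ell$-round discrete by $k$ individualizations iff $G$ has a distance-$\ell$ dominating set of size at most $k$ (and individualizing both vertices of the same class is never beneficial), which gives $\wtwo$-hardness of $\IndLDiscrete$ on $2$-bounded graphs.

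\textbf{Uncolored case.} To drop the coloring, I would attach to each pair $C_i=\{v_i^0,v_i^1\}$ a rigid uncolored ``tag'' subgraph of a unique shape (for instance, a tree whose branching encodes $i$), joined symmetrically to both $v_i^0$ and $v_i^1$ so the pair stays swap-symmetric. Chosen with enough asymmetry across $i$ and enough mass that no tag vertex is ever a useful individualization target, the tags give pairwise distinct refinement colors to the different $C_i$'s within a constant number of rounds, letting color refinement on the uncolored graph replay the colored analysis above.

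\textbf{Expected obstacle.} The most delicate ingredient is a normal-form lemma asserting that any optimal individualization may be assumed to use at most one main vertex $v_i^b$ per class and no tag, pendant-path, or other auxiliary vertex; proving this requires a careful round-by-round bookkeeping of color refinement in the modified construction, in particular showing that individualizing an auxiliary vertex never dominates more distant color classes than individualizing a main $v_i^b$ on the same side of the graph.
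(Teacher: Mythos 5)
Your two-bounded (colored) reduction is correct and is essentially the paper's argument, just factored more cleanly: the paper attaches the pendant paths of length $\ell-1$ directly inside the doubled graph and argues that $u_\ell,u'_\ell$ can only be separated within $\ell$ rounds if $u$ or a neighbor of $u$ is individualized, whereas you first pass through distance-$\ell$ domination and then double. Your splitting invariant ($C_j$ splits within $r$ rounds iff some individualized class is at distance at most $r$ in $G$) is the right normal-form statement and already disposes of the worry about individualizing auxiliary pendant classes, since any hit class is projected back to a vertex of $G$ and then to a root of $G_0$.

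The genuine gap is in the uncolored case. A rigid ``tag'' tree whose branching encodes the index $i$ cannot work for constant $\ell$, for two reasons. First, membership in $\ldiscrete$ requires \emph{every} vertex, including every tag vertex, to receive a unique color within $\ell$ rounds; a large rigid tree needs unboundedly many refinement rounds (or unboundedly many individualizations) to become discrete, so for fixed $\ell$ the tags themselves destroy the instance. Second, even separating the classes $C_i$ from one another via tree shape takes a number of rounds growing with the depth of the encoding, and any constant overhead $c$ shifts the timer semantics from $\ell$ to $\ell+c$, which is fatal at $\ell=1$ and cannot simply be absorbed for small $\ell$. The paper's gadget is designed around exactly this constraint: it adds $n^2$ vertices $x_1,\dots,x_{n^2}$ in a half-graph-like pattern so that (up to one tie) all auxiliary vertices have pairwise distinct degrees and all pairs $\{v_1,v'_1\}$ get pairwise distinct degrees, so everything resolves in a \emph{single} round of refinement; the one remaining tie and the symmetry of the $x_i$-block are broken by two twin pairs $\{y,y'\}$ and $\{z,z'\}$ that provably must contribute one individualized vertex each, and the parameter is adjusted to $k''=k+2$ accordingly. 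To repair your proof you need a degree-based (one-round) labelling of the classes rather than a shape-based one, together with an explicit accounting of the forced individualizations among the auxiliary vertices.
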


 \begin{proof}
   We prove this by providing a reduction from the $\wtwo$-complete
   problem \DomSet.
   The input to this problem is a graph $X=(V,E)$ and
   a number~$k$ (treated as parameter) and the question is whether
   there exists a \emph{dominating set} $D\subseteq V$ of size~$k$ 
   in~$X$, meaning that each vertex $v\in V\setminus D$ is adjacent 
   to at least one vertex in~$D$. We transform the \DomSet instance 
   $(X,k)$ with $X=(V,E)$ into an equivalent instance $(X',k)$ where 
   $X=(V',E',c')$ for $\IndLDiscrete$.
   First we explain the construction using colors and afterwards we 
   show how to simulate the colors using a gadget. For this simulation
   it will be helpful if there is no vertex with degree zero in~$X$, 
   so if there are such vertices, we remove them in advance and 
   decrease $k$ accordingly.

   For every $v\in V$, the colored graph~$X'$ contains the vertices 
   $v_1,\dots,v_\ell$ and $v'_1,\dots,v'_\ell$ as well as the edges 
   $\{v_i,v_{i+1}\}$ and $\{v'_i,v'_{i+1}\}$ for all~$i$ in 
   $\{1,\dots,\ell-1\}$.  Furthermore, we add the edges $\{v_1,u_1\}$
   and $\{v'_1,u'_1\}$ for every edge $\{u,v\}$ of~$X$. We choose~$c'$
   in such a way that for all $v\in V$ the set $\{v_1, v'_1\}$ is a
   color class and $c'(v_i)=c'(v'_i)$ for all $i\in 
   \{2,\dots,\ell\}$. 

   Let~$D$ be a dominating set in~$X$.
   Individualizing all the vertices~$v_1$
   in~$X'$ with $v\in D$ will distinguish the pairs $\{v_1,v'_1\}$ for
   all $v\in V$ after one round of color refinement. Thus
   after at most $\ell-1$ more rounds all color classes of~$X'$ will be
   singletons.

   For the converse direction, let~$I$ be a set of 
   vertices in~$X'$, such that individualizing them and
   running $\ell$~rounds of color refinement produces singleton 
   color classes. If $I$~contains vertices $v_i$~or~$v'_i$ for $i>1$, 
   we can replace them by~$v_1$ and this still puts~$X'$ in 
   \ldiscrete.
   It is easy to see that this replacement does not decrease
   the number of color classes that become singletons after 
   $\ell$~rounds.
   Hence, we can assume that $I$~only contains vertices of the 
   form~$v_1$, implying that the set $D=\{v\in V\mid v_1\in I\}$ is a 
   dominating set of size at most~$\card{I}$ in~$X$. To see this it 
   suffices to observe that the vertices $u_\ell$~and~$u'_\ell$ 
   can only be distinguished by color refinement within 
   $\ell$~rounds if either $u_1$~is in~$I$ or $u$~has a 
   neighbor~$v$ for which $v_1$~is    in~$I$, implying that 
   either~$u$ or some neighbor of~$u$ is in~$D$.
   
   We now turn to the alternations to show the hardness for uncolored graphs
   and thus transform $(X',k)$ to $(X'',k'')$ for an uncolored graph 
   $X''=(V'',E'')$.
   Let~$n$ be the number of vertices in~$X$ and $h\colon V\to \{1,\dots,n\}$
   be an arbitrary numbering. We add the vertices $x_1,\dots,x_{n^2}$ as
   well as $y,y',z$ and~$z'$ to~$X''$. The edge set~$E''$ will further contain
   $\{x_i,x_j\}$ such that $i\neq j$ and $i+j\leq n^2+1$.
   Additionally, we connect
   each $v_1$~and~$v'_1$ to~$x_i$ if $i\leq h(v)n$.
   After this $\deg(v_i)=\deg(v'_1)\in \{h(v)n,\dots,(h(v)+1)n-1\}$
   (for $\ell=1$, else shifted by~$1$) for any $v\in V$. 
   For $i\leq\lfloor \frac{n^2}{2}\rfloor$
   we have $\deg(x_i)= n^2-i + 2n-2\lfloor \frac{i-1}{n}\rfloor$ and 
   $\deg(x_i)= n^2-i+1 + 2n-2\lfloor \frac{i-1}{n}\rfloor$ for 
   $i>\lfloor \frac{n^2}{2}\rfloor$.
   Thus, except for vertices $x_j$~and~$x_{j+1}$ with 
   $j=\lfloor\frac{n^2}{2}\rfloor$ the degree sequence among 
   the~$x_i$ is strictly decreasing. Since it is impossible to 
   construct a graph with at least two vertices and singleton degree 
   classes, we need some form of coloring (at least for $\ell=1$).
   To achieve this we connect $y$~and~$y'$ to all~$x_i$ vertices and 
   add the edges $\{z,z'\}$, $\{z,x_j\}$ and $\{z',x_j\}$.
   Since $y$~and~$y'$ and $z$~and~$z'$, respectively, have 
   the same neighborhood (we call such pairs twins), one of each pair 
   has to be individualized, otherwise $X''$~will
   not even become discrete.
   This comes at the price of setting $k''=k+2$, thus $(X'',k'')$ is
   our instance.
   
   Let $I\subseteq V'$ be some set such that $X'$~with all vertices
   in $I$ individualized has only singleton color classes after 
   $\ell$~rounds of color refinement. In~$X''$, we individualize 
   all the vertices in $I$ as well as $y$~and~$z$. After 
   individualization only the vertices $x_i$ have 
   $\deg_{\{y\}}(x_i)=1$ and for no vertex $u$ except $y'$
   $\deg(u)=n^2$ and $\deg_{\{y\}}(u)=0$ holds. Similarly, $z'$ and
   $x_j$ have a unique tuple of color degrees. Furthermore, only
   the vertices $v_i$ and $v'_i$ for $v\in V$ and $i>1$ may have
   a degree of at most 2 and be no neighbor of $z$ at the same time.
  
   For the converse direction, assume that we have individualized
   all vertices in some set $I$ in~$X''$ and all color 
   classes have become singletons after $\ell$~rounds. Further we
   assume that $I$ is chosen such that $\card{I}$ is minimal. 
   Then $\card{I\cap \{x,x',y'y'\}}=2$ must hold and $x_i\notin I$
   for all $i\in\{1,\dots,n^2\}$ since for all $v\in V$ the 
   vertices $v_1$ and $v'_1$ have the same neighbors among the $x_i$
   and we already have $\deg(u_1)\neq \deg(v_1)$ for $u\neq v$. Thus
   individualizing $I\setminus  \{x,x',y'y'\}$ puts $X'$ in
   \ldiscrete.
 \end{proof}
  The preceding proof is inspired by~\cite[Theorem~7]{R} describing
  an fixed parameter reduction from \DomSet to a problem called
  $d$-\prob{Distance Paired Dominating Set}, which asks for a given
  graph and a number~$k$ (treated as parameter) whether there is a
  set~$C$ of $k$~vertices such that all vertices in the graph are
  within distance~$d$ of a vertex in~$C$ and there is a perfect
  matching between the vertices in~$C$.

\section{The number of non-individualized vertices as parameter}\label{sec:nonindiv}

In this section, we show that the problem \nkdiscrete is in~\FPT. In
fact, we show a linear kernel and consequently, a
$k^{O(k)}n^{O(1)}$~time algorithm for this problem.

\begin{theorem}\label{nkdiscrete-fpt}
There exists a kernel of size~$2k$ for \nkdiscrete that can be computed in
polynomial time.
\end{theorem}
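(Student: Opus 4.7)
The plan is to design a polynomial-time kernelization that reduces an instance $(X,k)$ of \nkdiscrete{} to an equivalent instance with at most $2k$ vertices; combined with a brute-force search over the $\binom{2k}{k} = k^{O(k)}$ size-$k$ subsets of the kernel, this yields the advertised $k^{O(k)} n^{O(1)}$ algorithm. The first step is to compute the stable color partition $\pi^* = \{C_1, \dots, C_m\}$ of $X$ via color refinement. The crucial observation is that individualizing a vertex in a singleton class of $\pi^*$ is a no-op for refinement (since it already has a unique color), so for any candidate $S$ of size $k$ we are free to rearrange the intersection of $S$ with the set $U = \bigcup_{|C_i|=1} C_i$ of singleton-class vertices without altering whether individualizing $V\setminus S$ yields a discrete graph.

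This already disposes of the easy case $m\geq k$: pick one vertex from each of $k$ distinct stable classes to form $S$, so that each chosen vertex becomes the unique non-individualized member of its class after the remaining $n-k$ vertices are individualized, hence uniquely colored. In that case, the algorithm outputs a trivial YES-kernel. When $m<k$, any valid $S$ must contain two or more vertices from some non-singleton stable class $C$, and color refinement is required to split these using information propagated from the individualized vertices outside $C$.

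For this remaining case, the reduction rule I propose is to identify ``redundant'' vertices inside each non-singleton class $C$: two vertices of $C$ are deemed interchangeable if swapping them in any partial individualization produces an isomorphic refined coloring, and only a bounded number of representatives per equivalence class are kept. A counting argument using $m<k$ should then bound the final kernel size by $2k$, since the reduced stable partition still has fewer than $k$ classes and each class contributes only $O(1)$ vertices. The main obstacle will be making the interchangeability notion concrete enough to be computable in polynomial time while still preserving the YES/NO answer; I expect the correctness argument to proceed by induction on the rounds of color refinement, exploiting the stability of $\pi^*$ to show that the refinement process on the kernel faithfully mirrors that on the original graph, in the spirit of the structural analyses used for the hierarchy~\eqref{eq:cr-hier}.
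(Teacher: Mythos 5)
There is a genuine gap in the case $m<k$, which is the heart of the theorem. Your easy case is fine: if the stable partition of $X$ has at least $k$ classes, picking one vertex per class from $k$ distinct classes gives a valid solution, so the instance is a \textsc{yes} instance. But for $m<k$ your plan is to keep ``$O(1)$ representatives per interchangeability class'' inside each stable class and conclude a kernel of size $2k$ by counting. This cannot work as stated. The only interchangeability notion that preserves the answer and is computable in polynomial time is the \emph{twin} relation ($N(u)\setminus\{v\}=N(v)\setminus\{u\}$): twins are never separated by color refinement, so all but one vertex of each twin class must be individualized, and one may safely keep a single representative. However, after contracting twin classes a stable color class need not shrink to constant size: a cycle $C_n$ (uncolored, $n\ge 5$) is twin-free and has a single stable class of size $n$, yet it is a perfectly legitimate instance with $m=1<k$. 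So the claim ``each class contributes only $O(1)$ vertices'' is false, and no local deletion rule inside large stable classes will preserve the \textsc{yes}/\textsc{no} answer while shrinking them to constant size.

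The missing idea, which the paper supplies as Lemma~\ref{twin-free}, is that a \emph{twin-free} graph on more than $2k$ vertices is automatically a \textsc{yes} instance, so no further shrinking is needed: after the twin reduction, either the graph already has at most $2k$ vertices (output it as the kernel) or one outputs a trivial \textsc{yes} instance. That lemma is proved not by bounding class sizes but by an iterative exchange argument: start from an arbitrary candidate set $T_0$ of size $k$; if the refinement is not discrete, twin-freeness yields a witness vertex $a\in T_0$ whose individualization strictly refines a non-singleton class inside $T_0$, and (using $|V|>2k$ and the pigeonhole on the neighborhood partition of $V\setminus T_0$) a replacement vertex $z$ whose addition to the candidate set does not change the stable partition; after at most $k-1$ such swaps the refinement is discrete. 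Without this (or an equivalent) argument your kernelization does not go through, and the induction on refinement rounds you sketch has no concrete reduction rule to validate.
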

We begin with some notation. Given a colored graph $X=(V,E,c)$, let~$S$ be a
subset of vertices.  Let~$\stabcol[S]$ denote the stable partition
obtained by individualizing every vertex in $V \setminus S$ and
performing color refinement.  We denote the number of color classes
in~$\stabcol[S]$ by~$\card{\stabcol[S]}$.  We can partition the vertices~$u$
in $V \setminus S$ by their neighborhood $N(u)\cap S$ inside the set~$S$.
We denote this partition of $V \setminus S$ by~$\neigh[S]$ and
the number of sets in it by~$\card{\neigh[S]}$.  We call two vertices $u$~and~$v$
\emph{twins} if $N(u)\setminus \{v\} = N(v) \setminus \{u\}$.  This
relation is an equivalence relation and the corresponding equivalence
classes are called \emph{twin classes}.  A graph is said to be
\emph{twin-free} if each twin class is of size~$1$.

The following lemma shows that sufficiently large twin-free graphs are
\textsc{yes} instances of the \nkdiscrete problem.
\begin{lemma}\label{twin-free}
  Let $X=(V,E)$ be a twin-free graph. Suppose $\card{V}>2k$.  There exists
  a set $S \subset V$ of size~$k$ such that $\stabcol[S]$~is
  discrete.  Moreover, we can compute such a set in $(nk)^{O(1)}$~time.
\end{lemma}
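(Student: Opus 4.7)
The plan is to construct $S$ by a greedy extension, starting from $S_0=\emptyset$ and adding one vertex at a time while maintaining the invariant that $\stabcol[S]$ is discrete. The base case is immediate: when $S=\emptyset$, every vertex of~$V$ is individualized and hence receives a distinct color. Iterating the extension $k$~times produces a set of size~$k$ with $\stabcol[S]$ discrete. The choice at each step can be made in polynomial time by trying each candidate in $V\setminus S$ and running color refinement, giving the claimed $(nk)^{O(1)}$ running time.

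The crux is the following \emph{Extension Claim}: if $|S|<k$, $\stabcol[S]$ is discrete, and $|V|>2k$, then there exists $v\in V\setminus S$ with $\stabcol[S\cup\{v\}]$ discrete. Let $B\subseteq V\setminus S$ be the set of \emph{bad} candidates for which $\stabcol[S\cup\{v\}]$ is not discrete. Since $|V\setminus S|>k>|S|$, it suffices to show $|B|\le|S|$, for instance by constructing an injection $B\hookrightarrow S$.

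To build such an injection, suppose $v\in B$ and fix a non-singleton color class of the stable coloring of $\stabcol[S\cup\{v\}]$ containing two vertices $x,y\in S\cup\{v\}$. By twin-freeness of~$X$ there is a distinguisher $w\in V\setminus\{x,y\}$ with $w\in N(x)\triangle N(y)$; if $w$ were individualized (i.e., $w\in V\setminus(S\cup\{v\})$), color refinement would separate $x$ and $y$ in the very first round, so $w\in S\cup\{v\}$. This localization of distinguishers inside $S\cup\{v\}$ is the main lever. I split on whether $v\in\{x,y\}$: in the case $x=v$, the bad $v$ merges with some $u=y\in S$ and we set $v\mapsto u$, using twin-freeness to argue that two distinct bad $v_1,v_2$ cannot merge with the same $u$ (their shared external neighborhood forces a twin relation); in the case $\{x,y\}\subseteq S$, the distinguishing of $x,y$ in $\stabcol[S]$ must have passed through $v$'s unique color, and we associate $v$ with an endpoint of that broken chain.

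The main obstacle, which I expect to be the delicate step, is handling multi-round cascades: non-discreteness of $\stabcol[S\cup\{v\}]$ may arise not from a first-round merge but only after several rounds, as the downstream effects of $v$'s lost color accumulate through neighbors. A robust way to control this is to strengthen the invariant carried along the greedy construction, tracking for each individualized vertex which distinguishing step of the stable refinement it enables; this allows us to identify $v$'s ``essential'' role and make the case analysis above fully rigorous while preserving the $|B|\le|S|$ bound.
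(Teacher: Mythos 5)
There is a genuine gap, and it sits exactly at the step your proof depends on: the Extension Claim with the bound $\card{B}\le\card{S}$ is the entire content of the lemma, and neither half of your case analysis establishes it. In the case $x=v$ (i.e., the bad vertex $v$ merges with some $u\in S$), your injectivity argument is incorrect: if $v_1$ and $v_2$ both merge with the same $u$, all you can conclude is that $N(v_1)$ and $N(v_2)$ each agree with $N(u)$ on the \emph{individualized} part, i.e.\ outside $S\cup\{v_1\}$ and $S\cup\{v_2\}$ respectively. Hence $v_1$ and $v_2$ agree with each other only outside $S\cup\{v_1,v_2\}$; they may differ arbitrarily on $S\setminus\{u\}$ (e.g.\ $v_1\sim s_1,\,v_1\not\sim s_2$ while $v_2\not\sim s_1,\,v_2\sim s_2$), so twin-freeness yields no contradiction and the map $v\mapsto u$ need not be injective. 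In the case $\set{x,y}\subseteq S$ you defer the argument entirely to an unspecified ``strengthened invariant.'' (Incidentally, this second case is vacuous: if $v$ were a singleton class of $\stabcol[S\cup\{v\}]$, that partition would be an equitable refinement of the initial coloring of $\stabcol[S]$ and hence equal to it, so it would be discrete; thus every bad $v$ must itself merge with some $u\in S$. But this observation only routes everything into the first case, where the counting argument is the one that fails.) As it stands, the truth of the Extension Claim itself is unestablished, so the greedy invariant may not be maintainable; you would need a genuinely new argument to bound $\card{B}$.

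For contrast, the paper avoids the extension problem altogether by a local-search argument on sets of size exactly $k$: start with an arbitrary $T_0$ of size $k$ and, as long as $\stabcol[T_0]$ is not discrete, strictly increase the number of color classes. Twin-freeness applied to two vertices $u,v$ in a non-singleton class (necessarily inside $T_0$) produces a distinguishing witness $a$ that must lie in $T_0$; removing $a$ refines the partition. The hypothesis $\card{V}>2k$ is then used in a structurally different way from your pigeonhole: it guarantees that some block of the partition of $V\setminus T_0$ by neighborhoods in $T_0$ has at least two elements $x,y$, whose non-twin witness $z$ lies outside $T_0$ and is automatically assigned a unique color, so $z$ can be added back ``for free'' to restore $\card{T}=k$ without coarsening the partition. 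If you want to keep your greedy scheme, you would have to either prove the Extension Claim (which looks at least as hard as the lemma) or replace the invariant ``$\stabcol[S]$ is discrete'' by a potential function along these lines.
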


\begin{proof}
  We describe the algorithm for computing~$S$. Initially, we pick an
  arbitrary subset $T_0\subset V$ of size~$k$ and run color refinement
  to compute the stable partition~$\stabcol[T_0]$. Let $C_1,\dots,C_l$
  be the color classes in~$\stabcol[T_0]$ that are contained in~$T_0$.
  If $\stabcol[T_0]$~is already discrete, we output the set $S = T_0$
  and stop.

  Otherwise we rename the color classes such that $\card{C_1}\ge\card{C_i}$ for
  $i=2,\dots,l$. Then we compute the partition
  $\neigh[S]=\{B_1,\dots,B_m\}$ of $V \setminus S$, where we assume
  that $\card{B_1}\ge\card{B_i}$ for $i=2,\dots,m$. If $m\geq k$, then we
  form~$S$ by picking an arbitrary vertex from each of the sets
  $B_1,\dots,B_k$. To see that $\stabcol[S]$~is discrete it suffices
  to observe that individualizing all the vertices in $V \setminus S$
  causes the separation of the sets $B_1,\dots,B_m$ and
  individualizing all but at most one vertex in each set~$B_i$ makes
  the graph discrete.

  It remains to handle the case that $m<k$. We show that in this case
  it is possible to compute in polynomial time a set~$T_1$ of size~$k$
  such that $\card{\stabcol[T_1]} > \card{\stabcol[T_0]}$. By 
  repeating this
  procedure $i\le k-1$ times, we end up with a set~$T_i$ for which
  $\stabcol[T_i]$~is discrete. Let $u$~and~$v$ be two vertices inside the
  color-class~$C_1$. Since $X$~is twin-free, there must be a vertex~$a$
  witnessing the fact that $u$~and~$v$ are not twins. Since $u$~and~$v$
  have the same color, $a$~cannot be individualized, implying that
  $a\in T_0$.  Let~$C_j$ be the color class containing~$a$. Since
  $C_1$~and~$C_j$ are stable color classes, there must exist a vertex
  $b \in C_j$ such that $\{u,a\}$ and $\{v,b\}$ are edges and
  $\{u,b\}$ and $\{v,a\}$ are non-edges. Clearly, individualizing~$a$
  refines the color class~$C_1$. Therefore, the set $T' = T_0- \{a\}$
  has the desired property $\card{\stabcol[T']} > \card{\stabcol[T_0]}$ but is
  of size $k-1$.

  Since $\card{V}>2k$ and $m<k$, it follows that $\card{B_1}\geq 2$. Let $x$~and~$y$
  be two vertices inside~$B_1$. Since $X$~is twin-free, there must be
  a vertex~$z$ witnessing the fact that $x$~and~$y$ are not
  twins. Since all vertices in~$T_0$ either have both vertices $x$~and~$y$
  as neighbors or none of them (otherwise, $x$~and~$y$ would have
  different neighborhoods inside~$T_0$, contradicting the fact that
  $x,y\in B_1$), it follows that $z\not\in T_0$.  We claim that the
  set $T_1 = T' \cup \{z\}$ yields the same stable partition as~$T'$,
  i.e., $\stabcol[T_1]=\stabcol[T']$. In fact, color refinement anyway
  assigns a unique color to~$z$, since it is the only
  non-individualized vertex that is adjacent to exactly one of the two
  individualized vertices $x$~and~$y$. This completes the proof of the
  lemma.
\end{proof}

\begin{proof}[Proof of Theorem~\ref{nkdiscrete-fpt}.]
  We now outline a simple kernelization algorithm for \nkdiscrete.
  Let~$X$ be the given graph and let~$k$ be the given parameter. The algorithm
  first makes the graph~$X$ twin-free by removing all but one vertex
  from each twin-class.  

  If the resulting graph~$X'$ has at most~$2k$ vertices, it outputs
  the instance $(X',k)$ as the kernel. Since in each twin class of~$X$,
  all but one vertices have to be individualized to make the
  graph discrete, the two instances $(X,k)$ and $(X',k)$ are indeed
  equivalent with respect to the \nkdiscrete problem.
 
  If $X'$~has more than~$2k$ vertices, the algorithm computes in
  polynomial time a set~$S$ of size~$k$ such that individualizing
  every vertex outside of~$S$ makes the graph~$X'$ discrete (see
  Lemma~\ref{twin-free}). Clearly this set~$S$ is also a solution
  for~$X$, so the kernelization algorithm can output a trivial
  \textsc{yes} instance.
\end{proof}

\end{document}